\renewcommand*\@fnsymbol[1]{\the#1}  %% change footnotes to numeric superscripts
\title{Graph Minors and the Linear Reducibility of Feynman Diagrams}
\author{Benjamin Moore\thanks{The authors would like to thank NSERC for financial support.  This work was partially completed while both authors were at Simon Fraser University.} \thanks{Email: brmoore@uwaterloo.ca}
}
\author{Karen Yeats\thanks{Email: kayeats@uwaterloo.ca}}
\affil{Department of Combinatorics and Optimization, University of Waterloo}
\date{}
\newtheoremstyle{case}{}{}{\normalfont}{}{\itshape}{:}{ }{}
\newtheorem{theorem}{Theorem}
\newtheorem{proposition}[theorem]{Proposition}
\newtheorem{definition}[theorem]{Definition}
\newtheorem{Conjecture}[theorem]{Conjecture}
\newtheorem{observation}[theorem]{Observation}
\newtheorem{lemma}[theorem]{Lemma}
\newtheoremstyle{case}{}{}{\normalfont}{}{\itshape}{\normalfont:}{ }{}
\theoremstyle{case}
\begin{document}

\maketitle

\begin{abstract}
We look at a graph property called reducibility which is closely related to a condition developed by Brown to evaluate Feynman integrals. We show for graphs with a fixed number of external momenta, that reducibility with respect to both Symanzik polynomials is graph minor closed. We also survey the known forbidden minors and the known structural results.  This gives some structural information on those Feynman diagrams which are reducible.
\end{abstract}

\section{Introduction}

In recent years, there has been a large amount of research on both the mathematical and practical sides of calculating Feynman integrals using multiple polylogarithms \cite{FrancisBig, panzerphdthesis, olivercensus,Panzertalk,Schnetztalk}.  In \cite{FrancisBig}, Brown gave a sufficient condition, called \textit{linear reducibility}, for evaluating Feynman integrals in parametric form iteratively using multiple polylogarithms. Using this condition, Brown showed that if a graph has vertex width less than three, then the corresponding Feynman period evaluates to  multiple zeta values. To determine if a Feynman integral is linearly reducible, one does not have to look at the integral at all, instead needing only to focus on a few graph polynomials, namely the Symanzik polynomials. We quickly review the definitions of the Symanzik polynomials. 

Let $G$ be a graph. To each edge $e \in E(G)$ we associate a Schwinger parameter $\alpha_{e}$. Let $\mathcal{T}$ denote the set of spanning trees of $G$. Then the \textit{first Symanzik polynomial} of $G$ is

\[\Psi_{G} = \sum_{T \in \mathcal{T}} \prod_{e \not \in E(T)} \alpha_{e}. \]

We would also like to allow kinematics. To do so, to each edge $e$ we associate an edge weight $m_{e} \in \mathbb{R}$, the \textit{mass} of $e$. Also, we will allow our graph to have \textit{external edges}, as in, edges with exactly one endpoint. The external edges have associated external momenta, but it will be more convenient to associate the external momenta to the vertices with external edges.  Formally for each $v \in V(G)$, we associate a vector $\rho_{v} \in \mathbb{R}^{4}$ (with Minkowski signature) called the \textit{external momentum at $v$} which will be $0$ for vertices without external edges.  This lines up neatly with the graph theory; the vertices with external edges become the root vertices in rooted forbidden minor results.  For the results of Section~\ref{sec minors}, we will restrict to on-shell external momenta.  Let $G$ be a graph and $H$ a subgraph of $G$. We will say the momentum flowing into $H$ is 

\[\sum_{v \in V(H)} \rho_{v}.\]

Given a graph $G$, a \textit{spanning $2$-forest of $G$} is an unordered pair $(T_{1},T_{2})$ where $T_{1}$ and $T_{2}$ are trees such that $V(T_{1}) \cup V(T_{2}) = V(G)$ and $V(T_{1}) \cap V(T_{2}) = \emptyset$. Then, letting $\mathcal{T}$ be the set of spanning $2$-forests of $G$, the \textit{second Symanzik polynomial} is

 \[\Phi_{G} = \sum_{\mathclap{(T_{1},T_{2}) \in \mathcal{T}}} (\rho^{T_{1}})^{2}  \prod_{\mathclap{e \not \in E(T_{1} \cup T_{2})}} \alpha_{e} + \Psi_{G}\sum_{i =1}^{|E(G)|} \alpha_{i}m_{i}^{2}.\]

Here $(\rho^{T_{1}})^{2}$ is taken to mean the Minkowski norm squared.  We note some authors let the second Symanzik polynomial be just the terms depending on the external momenta \cite{FrancisSmall,Brownsurveypaper}. Following \cite{Bognerpaper}, if $G$ is a disconnected graph, then we say $\Phi_{G} = 0$ and $\Psi_{G} = 0$. This is a reasonable convention; if we start with a connected graph, we want to view deleting edges as restricting the number of spanning trees and spanning $2$-forests until there are none. Of course, a disconnected graph can have a spanning $2$-forest, however in such a case the spanning $2$-forest is really just two spanning trees of the connected components, and thus not what we would like to consider. 

At a high level, a Feynman integral being linearly reducible means that starting with the Symanzik polynomials, there exists an ordering such that one can iteratively integrate the integral in parametric form such that polynomials which appear can be dealt with by multiple polylogarithms.  Therefore, we can talk about sets of polynomials being linearly reducible. We note that such an ordering does not always occur, and there are examples of Feynman integrals which require a larger class of functions than multiple polylogarithms \cite{nonmultiplezetavalue}.

Surprisingly, Brown showed that for the first Symanzik polynomial, linear reducibility is graph minor closed \cite{FrancisBig}, as in, given a reducible graph, the deletion or contraction of any edge results in a reducible graph. Then, by the celebrated well-quasi-ordering theorem of Robertson and Seymour \cite{WQOtheorem}, linear reducibility  for the first Symanzik polynomial is characterized by a finite set of forbidden minors. Due to this, there have been attempts to understand linear reducibility (and various related notions) using graph minors \cite{Crumppaper,Bognerpaper,FrancisBig,Martin}. One such attempt by Bogner and L\"uders was to extend the graph minor closed result to both Symanzik polynomials.  They showed that a  stronger condition called Fubini reducibility\footnote{Bogner and L\"uders occasionally used the term linear reducibility, however their version of reducibility is the same as the one outlined in \cite{FrancisSmall} which Brown calls Fubini reducibility.} is graph minor closed for graphs with a fixed number of external momenta \cite{Bognerpaper}. Unfortunately there are a few flaws in their argument. For one, the definition of graph minors includes the deletion of isolated vertices, for which their claim fails. This is because deleting an isolated vertex can cause a disconnected graph to become connected, which can cause drastic changes in $\Phi$ and $\Psi$ and result in a non-Fubini reducible graph. This is not too much of a problem, as when one restricts to connected graphs, one never needs to delete isolated vertices when finding minors. However, their proof does not follow even in the connected case.

In this paper, we correct and generalize the Bogner and L\"uders result to ``compatibility graph reducibility". Here we note compatibility graph reducibility is a stronger condition than the notion of linear reducibility in \cite{FrancisBig} (as in, if a set is compatibility graph reducible, then it is linearly reducible), but is weaker than the notion of Fubini reducibility used by Bogner and L\"uders (as in, Fubini reducibility implies compatibility graph reducibility). Additionally, we list the known forbidden minors for graphs with four external on-shell momenta (with one new forbidden minor included) and outline some structural results for graphs excluding these minors.

\section{The Compatibility Graph Reduction}

Here we review the compatibility graph reduction outlined in \cite{FrancisBig}. Before jumping into the definition, we pause to give some motivation and intuition. The compatibility graph reduction is an algorithm which simulates what polynomials would appear at each stage of iteratively integrating a parametric Feynman integral under some order, by only acting on the polynomials. At each step, the algorithm checks if the techniques in \cite{FrancisBig} are admissible for the integration, or stops if it finds an obstruction. While compatibility graph reducibility is a sufficient condition for the evaluation of Feynman integrals, in general one can ask if an arbitrary set of polynomials with rational coefficients is compatibility graph reducible. We will first give a simpler variant of the algorithm, and use that to give the full definition.

Let $S$ be a set of polynomials in the polynomial ring $Q[\alpha_1,\alpha_2,\ldots, \alpha_r]$.  Let $\sigma$ be a permutation of $\{\alpha_1,\ldots,\alpha_r\}$. Let $C_{S}$ be a complete graph on $|S|$ vertices, where we consider the vertex set of $C_{S}$ to be the set of polynomials in $S$. Suppose we are at the $k^{th}$ iteration of the algorithm. If $k \geq 2$, then  we have a set of polynomials with rational coefficients $S_{(\sigma(1),\ldots,\sigma(k-1))} =  \{f_{1},\ldots,f_{n}\}$ and compatibility graph $C_{(\sigma(1),\ldots,\sigma(k-1))}$ (here and throughout we abuse notation and let $\sigma(i) = \alpha_{\sigma(i)}$). Here we let the vertices of $C_{(\sigma(1),\ldots,\sigma(k-1))}$ be labelled by the polynomials from $S_{(\sigma(1),\ldots,\sigma(k-1))}$, and we define the edge set below. Otherwise $k=1$ and we use $S$ and $C_{S}$. We then do the following:

\begin{enumerate}
\item{If there is a polynomial $f \in S_{(\sigma(1),\ldots,\sigma(k))}$ such that $f$ is not linear in $\sigma(k)$ (here and throughout linear means the degree of $f$ in $\sigma(k)$ is at most one, so in particular $f$ may be constant in $\sigma(k)$), we end the algorithm, otherwise continue.}

\item{Then for all $i \in \{1,\ldots,n\}$, given a polynomial $f_i \in S_{(\sigma(1),\ldots,\sigma(k))}$, we write $f_{i} = g_{i}\sigma(k) + h_{i}$, where $g_{i} =  \frac{\partial f_{i} }{\partial \sigma(k)}$ and $h_{i} = f_{i}|_{\sigma(k)=0}$.}

\item{Let $S^{1} = \{g_{i} | i \in \{1,\ldots,n\} \}$. Let $S^{2} = \{h_{i} | i \in \{1,\ldots,n\} \}$. Let $S^{3} = \{g_{i}h_{j}-h_{i}g_{j}| i,j \in \{1,\ldots,n\}, i \neq j  \text{ where }  f_{i}f_{j} \in E(C_{(\sigma(1),\ldots,\sigma(k-1))}) \}$. Let $S^{4} = S^{1} \cup S^{2} \cup S^{3}$. }

\item{Let $\tilde{S}$ be the set of irreducible factors over $\mathbb{Q}$ of polynomials in $S^{4}$.}

\item{Let $S_{(\sigma(1),\ldots,\sigma(k))} = \tilde{S}$ and construct a new compatibility graph $C_{(\sigma(1),\ldots,\sigma(k))}$ with some rule set (see below).}

\item{ Repeat the above steps with $S_{(\sigma(1),\ldots,\sigma(k))}$ in place of $S_{(\sigma(1),\ldots,\sigma(k-1))}$ and $C_{(\sigma(1),\ldots,\sigma(k))}$ in place of $C_{(\sigma(1),\ldots,\sigma(k-1))}$.}
\end{enumerate}

Now we show how to construct the compatibility graphs, $C_{(\sigma(1),\ldots,\sigma(k))}$. Recall, the vertex set of $C_{(\sigma(1),\ldots,\sigma(k))}$ is the set of polynomials in $S_{(\sigma(1),\ldots \sigma(k-1))}$. To determine the edges of $C_{(\sigma(1),\ldots,\sigma(k))}$ we associate a set of $2$-tuples to each vertex of $C_{(\sigma(1),\ldots,\sigma(k))}$ to keep track of how the polynomials were created. Let $m \in V(C_{(\sigma(1),\ldots,\sigma(k))})$. 

\begin{itemize}
\item{If $m$ is an irreducible factor of a polynomial $g_{i} \in S^{1}$ we associate the $2$-tuple $\{0,i\}$ with $m$.}

\item{ If $m$ is an irreducible factor of some polynomial $h_{i} \in S^{2}$ then we associate the $2$-tuple $\{i,\infty\}$ with $m$. Additionally, if $h_{i} = f_{i}$ where $f_{i} \in S_{(\sigma(1),\ldots \sigma(k-1))}$ then we associate the $2$-tuple $\{0,i\}$ to $m$ as well as $\{i,\infty\}$.}

\item{If $m$ is the irreducible polynomial of some polynomial $g_{i}h_{j}-h_{i}g_{j} \in S^{3}$, then we associate the $2$-tuple $\{i,j\}$ to $m$.}
\end{itemize}

Now let $m,n \in V(C_{(\sigma(1),\ldots,\sigma(k))})$. The edge $mn \in E(C_{(\sigma(1),\ldots,\sigma(k))})$ if and only if there exists a $2$-tuple associated with $m$ and a $2$-tuple associated with $n$ such that their intersection is non-empty. 

 As an example, suppose a polynomial $f$ is associated with the $2$-tuples $\{1,\infty\}$ and $\{3,4\}$, and polynomial $g$ is associated with the $2$-tuples $\{2,\infty\}$ and $\{1,3\}$. Then since $\{3,4\} \cap \{1,3\} = \{3\}$, the polynomials $f$ and $g$ are compatible. However if $g$ instead was associated with the $2$-tuples $\{5,6\}$ and $\{0,2\}$, then as all pairwise intersections are empty, $f$ and $g$ would not be compatible. We note that this technique does not always result in a complete graph (and thus we consider fewer polynomials in the reduction algorithm). We refer the reader to \cite{FrancisBig} for examples of compatibility graphs which are not the complete graphs when starting with the first Symanzik polynomial. We remark that there is a similar notion of compatibility graphs developed by Panzer in \cite{panzerphdthesis}. 
 
 Now we define the full compatibility graph reduction. Let $S$ be a set of polynomials in the polynomial ring $Q[\alpha_1,\alpha_2,\ldots, \alpha_r]$. Initialize $C_{S}$ to be the complete graph on $|S|$ vertices and let $\sigma$ be a permutation of $\{\alpha_1,\ldots,\alpha_r\}$.

We define $S_{[\sigma(1)]} = S_{(\sigma(1))}$ and $C_{[\sigma(1)]} = C_{(\sigma(1))}$, where we obtain $S_{(\sigma(1))}$ and $C_{(\sigma(1))}$ using the algorithm outlined at the start of the section. Additionally,  we define $S_{[\sigma(1),\sigma(2)]} = S_{(\sigma(1),\sigma(2))} \cap S_{(\sigma(2),\sigma(1))}$ and $C_{[\sigma(1),\sigma(2)]}$ to be a graph where $fg \in E(C_{[\sigma(1),\sigma(2)]})$ if and only if $fg \in E(C_{(\sigma(1),\sigma(2))}) \cap E(C_{(\sigma(2),\sigma(1))})$. Then we can inductively define the sets $S_{[\sigma(1),\ldots,\sigma(k)]}$ by 

\begin{center}
\[S_{[\sigma(1),\ldots,\sigma(k)]} = \bigcap_{1 \leq i \leq k} S_{[\sigma(1),\ldots,\hat{\sigma(i)},\ldots,\sigma(k)](\sigma(i))},  \]
\end{center}

and we define $C_{[\sigma(1),\ldots,\sigma(k)]}$ to be the compatibility graph for $S_{[\sigma(1),\ldots,\sigma(k)]}$ such that $fg \in C_{[\sigma(1),\ldots,\sigma(k)]}$ if and only if

 \[fg \in  \bigcap_{1 \leq i \leq k} E(C_{[\sigma(1),\ldots,\hat{\sigma(i)},\ldots,\sigma(k)](\sigma(i))}).\]
 
Here we make a slight technical note. In the above notions of intersecting sets of polynomials, if two polynomials differ by a constant prefactor, we do not remove them from the intersection. Now finally we can define what it means for a set to be compatibility graph reducible. 

\begin{definition}
Let $S$ be a set of polynomials in the polynomial ring $\mathbb{Q}[\alpha_{1},\ldots,\alpha_{r}]$.  Let $\sigma$ be a permutation of $\{\alpha_{1},\ldots,\alpha_{r}\}$. We say $S$ is \emph{compatibility graph reducible with respect to $\sigma$} if for all $ 1 \leq i \leq r-1$, all polynomials in the set $S_{[\sigma(1),\ldots,\sigma(i)]}$ are linear in $\sigma(i+1)$. If there exists a permutation $\sigma$ such that $S$ is compatibility graph reducible with respect to $\sigma$, we say that $S$ is \emph{compatibility graph reducible}. Given a graph $G$, and $S \subseteq \{\Psi_{G},\Phi_{G}\}$, if $S$ is compatibility graph reducible, then we say $G$ is \emph{compatibility graph reducible with respect to $S$.}
\end{definition}

%%% I am concerned with this paragraph.  See email
We pause to remark on what happens when a variable in $\mathbb{Q}[\alpha_{1},\ldots,\alpha_{r}]$ does not appear in the set $S$. In this case, let $\alpha$ be the variable, and apply one step of the reduction algorithm to $S$ and $\alpha$. We then obtain the set $S_{[\alpha]}$ which is the set of irreducible factors of $S$ over $\mathbb{Q}$, and the resulting compatibility graph is a complete graph. In some sense, we can always assume every variable appears in some polynomial of $S$. To see this, suppose we pick an ordering of our variables where all the variables which do not appear come first. Then as the initial compatibility graph is complete, applying the reduction algorithm to those variables simply makes the set of polynomials irreducible over $\mathbb{Q}$. We show later (Lemma \ref{irreduciblepolynomials}) that this does not affect reducibility. 

We note the difference between Fubini reducibility used by Bogner and L\"uders and compatibility graph reducibility, is that in Fubini reducibility, at each step every compatibility graph is a complete graph. There are graphs which are compatibility graph reducible but not Fubini reducible. For example, consider the $4$-cycle $C_{4}$, with four on-shell external momenta. Then $C_{4}$ is not Fubini reducible with respect to $\Phi$ and $\Psi$, but is compatibility graph reducible \cite{Martin}. For brevity, we will refer to compatibility graph reducibility as just reducibility.

\section{Properties of Reducible sets}

In this section, we prove the following result:

\begin{theorem}
\label{fullclaim}
Let $S = \{P_{1},\ldots,P_{N}\}$ be a set of polynomials which is reducible in the order $(\alpha_{1},\ldots,\alpha_{n})$. Fix $l \in \{1,\ldots,n\}$. Let $lc(P)$ denote the leading coefficient of polynomial $P$ with respect to $\alpha_{l}$.   Let $S' = \{lc(P)| P \in S\}$ and $S'' = \{P_{1}|_{\alpha_{l}=0}, \ldots, P_{N}|_{\alpha_{l}=0}\}$. Then both $S'$ and $S''$ are reducible with the order $(\alpha_{1},\ldots, \alpha_{n})$.
\end{theorem}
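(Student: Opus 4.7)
The plan is to establish both parts of the theorem via a single divisibility lemma, proved by induction on the reduction step $k$, that relates the reductions of $S'$ and $S''$ to that of $S$. Precisely, I will show that for every $k \in \{0,\ldots,n-1\}$, every $f \in S''_{[\alpha_1,\ldots,\alpha_k]}$ divides $P|_{\alpha_l=0}$ for some $P \in S_{[\alpha_1,\ldots,\alpha_k]}$, and every $f \in S'_{[\alpha_1,\ldots,\alpha_k]}$ divides $lc(P)$ for some $P \in S_{[\alpha_1,\ldots,\alpha_k]}$. Since reducibility of $S$ gives linearity of each such $P$ in $\alpha_{k+1}$, and both $P|_{\alpha_l=0}$ and $lc(P)$ inherit linearity in $\alpha_{k+1}$ (the first because substitution preserves degree bounds in other variables; the second by the case analysis below), any divisor of them is at most linear in $\alpha_{k+1}$, which is exactly the reducibility condition for $S'$ and $S''$.

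For the inductive step at a variable $\alpha_k$ with $k \neq l$, the key observation is that the two operations $(\cdot)|_{\alpha_l=0}$ and $lc(\cdot)$ commute nicely with the reduction algorithm. Writing $P = g\alpha_k + h$ in linear form, we have $P|_{\alpha_l=0} = g|_{\alpha_l=0}\alpha_k + h|_{\alpha_l=0}$, so the $g', h'$ produced by reducing $P|_{\alpha_l=0}$ are literally $g|_{\alpha_l=0}$ and $h|_{\alpha_l=0}$, and $(g_i h_j - h_i g_j)|_{\alpha_l=0} = g'_i h'_j - h'_i g'_j$. Passing to irreducible factors over $\mathbb{Q}$ can only refine the factorization further, so the divisibility relation is preserved. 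For the leading-coefficient case I will use the multiplicativity $lc(fg) = lc(f)\,lc(g)$ in the integral domain $\mathbb{Q}[\alpha_1,\ldots,\hat{\alpha_l},\ldots,\alpha_n]$, together with a short case analysis on the relative values of $\deg_l g$ and $\deg_l h$: $lc(P)$ takes one of the three forms $lc(g)\alpha_k$, $lc(h)$, or $lc(g)\alpha_k + lc(h)$, all at most linear in $\alpha_k$. A similar case check shows that the resulting $g'$, $h'$, and $g'_i h'_j - h'_i g'_j$ are either zero or divide the leading coefficient (in $\alpha_l$) of the corresponding polynomial produced by reducing $S$.

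The step $k = l$ is handled separately. By the inductive hypothesis, every polynomial in $S''_{[\alpha_1,\ldots,\alpha_{l-1}]}$ divides some $P|_{\alpha_l=0}$ and every polynomial in $S'_{[\alpha_1,\ldots,\alpha_{l-1}]}$ divides some $lc(P)$, and both of these are independent of $\alpha_l$. Hence the polynomials themselves are independent of $\alpha_l$, so the linearity check at $\alpha_l$ is trivial and the reduction step at $\alpha_l$ introduces no new polynomials. The inductive argument then continues unchanged for the remaining variables $\alpha_{l+1},\ldots,\alpha_n$, where any $P \in S_{[\alpha_1,\ldots,\alpha_k]}$ for $k \geq l$ is already independent of $\alpha_l$, so $P|_{\alpha_l=0} = lc(P) = P$ and the divisibility reduces to the trivial one.

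The main obstacle I expect is the bookkeeping around the compatibility graphs, especially because the $[\cdot]$ reduction is defined as an intersection over prefix orderings and the $S^3$ polynomials are included only when the corresponding pair is adjacent. The divisibility statement must therefore be strengthened to a statement that matches the $2$-tuple labels of new vertices in the reductions of $S'$ and $S''$ to those in the reduction of $S$, so that an edge in the former guarantees a compatible edge in the latter and the inductive construction of $S^3$ is consistent. The subtlest subcase is the leading-coefficient analogue of $S^3$ when $\deg_l g_i + \deg_l h_j = \deg_l h_i + \deg_l g_j$ and the leading coefficients of $g_i h_j$ and $h_i g_j$ cancel; fortunately in that case $g'_i h'_j - h'_i g'_j$ is exactly zero, so no new polynomial is introduced and no edge needs to be matched.
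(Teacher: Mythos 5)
Your high-level strategy is essentially the paper's: show that the operations $(\cdot)|_{\alpha_l=0}$ and $lc(\cdot)$ commute, up to factorization over $\mathbb{Q}$, with one step of the reduction algorithm, paying special attention to the leading-coefficient analogue of $S^3$ where $\frac{\partial lc(f_1)}{\partial\alpha_1}lc(f_2)|_{\alpha_1=0}-\frac{\partial lc(f_2)}{\partial\alpha_1}lc(f_1)|_{\alpha_1=0}$ may vanish instead of equalling $lc(g_1h_2-h_1g_2)$. That observation is exactly the content of the paper's Lemma~\ref{equationgrind}, and your treatment of it is correct. The organization, however, is different. You run a direct induction on the reduction step $k$ carrying a divisibility invariant. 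The paper instead takes a minimal counterexample on the number of variables $n$, so that only a single explicit application of the reduction step ever needs to be analyzed; the remaining steps are discharged by the inductive hypothesis together with three modular results proved once up front: Lemma~\ref{subsetlemma} (subsets of reducible sets are reducible, with the compatibility-graph containment built into the induction), Lemma~\ref{irreduciblepolynomials} (replacing a set by its irreducible factors over $\mathbb{Q}$ does not change reducibility), and Proposition~\ref{removingconstants} (constants, in particular the $0$ from your cancellation case, can be dropped).

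The genuine gap is the one you flag in your last paragraph but do not close: the compatibility-graph bookkeeping. The divisibility invariant ``every $f\in S''_{[\alpha_1,\ldots,\alpha_k]}$ divides some $P|_{\alpha_l=0}$ with $P\in S_{[\alpha_1,\ldots,\alpha_k]}$'' is not strong enough to push through the inductive step, because the $S^3$ polynomials are only formed for pairs adjacent in the compatibility graph. You therefore need a companion invariant saying that whenever $f_1,f_2\in S''_{[\alpha_1,\ldots,\alpha_k]}$ are adjacent, the witnesses $P_1,P_2$ can be chosen adjacent in $C_{S_{[\alpha_1,\ldots,\alpha_k]}}$, and moreover so that the resultant $g'_1h'_2-h'_1g'_2$ built from $f_1,f_2$ genuinely divides the restriction of the resultant built from $P_1,P_2$. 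The latter fails to be automatic once $f_i$ is a proper divisor of $P_i|_{\alpha_l=0}$: writing $P_i|_{\alpha_l=0}=f_im_i$, the relationship between the linear parts of $f_i$ and of $P_i|_{\alpha_l=0}$ in the next variable depends on whether the cofactor $m_i$ or $f_i$ carries the $\alpha_{k+1}$-linear part, and this requires a case analysis and label-matching argument that your write-up only gestures at. None of this is unfixable, but it is exactly the work the paper's Lemma~\ref{subsetlemma} and Lemma~\ref{irreduciblepolynomials} package away once and for all; reproducing that machinery inline in your step-by-step induction is what turns the proposal into a proof, and it is the substantial part.
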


This will essentially give us that reducibility is graph minor closed for the Symanzik polynomials. We start this section by noting that reducibility is well behaved under subsets.

%----------------------------------------------------------------------------------------------------------------------------------------------%

\begin{lemma}
\label{subsetlemma}
Let $S = \{f_1,\ldots,f_n\}$ be a set of polynomials in the polynomial ring $\mathbb{Q}[\alpha_{1},\ldots,\alpha_{r}]$ and let $\sigma$ be a permutation of $\{\alpha_{1},\ldots,\alpha_{r}\}$. Suppose $S$ is reducible with respect to $\sigma$. Then any subset $L \subseteq S$ is reducible with respect to $\sigma$.  
\end{lemma}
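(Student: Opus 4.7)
The plan is to prove by induction the strengthened statement that for every $L \subseteq S$ and every $k$, both $L_{[\sigma(1),\ldots,\sigma(k)]} \subseteq S_{[\sigma(1),\ldots,\sigma(k)]}$ and every edge of $C_{L_{[\sigma(1),\ldots,\sigma(k)]}}$ is an edge of $C_{S_{[\sigma(1),\ldots,\sigma(k)]}}$, after identifying each vertex with the polynomial it labels. The edge inclusion must be carried in tandem because the edges control which cross terms $g_ih_j-h_ig_j$ enter $S^{3}$ at the next step, so neither condition can be deduced in isolation.

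The engine is a one-step monotonicity lemma: if $L \subseteq S$ are polynomial sets both linear in a variable $\alpha$, and every edge of $C_L$ is an edge of $C_S$, then $L_{(\alpha)} \subseteq S_{(\alpha)}$ and every edge of $C_{L_{(\alpha)}}$ is an edge of $C_{S_{(\alpha)}}$. For the polynomial inclusion, $S^{1}$ and $S^{2}$ built from $L$ sit inside those built from $S$ directly from $L \subseteq S$, and the edge-inclusion hypothesis forces the same for $S^{3}$; irreducible factoring over $\mathbb{Q}$ then preserves the inclusion. For the edge inclusion, I would view the 2-tuples as referencing polynomials rather than raw integer indices, so a factor of $g_i$ carries $\{0,f_i\}$, a factor of $h_i$ carries $\{f_i,\infty\}$, and a factor of $g_ih_j-h_ig_j$ carries $\{f_i,f_j\}$. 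Then the tuples attached to a polynomial $m\in L_{(\alpha)}$ are a sub-collection of those attached to $m$ in $S_{(\alpha)}$, so any two vertices whose tuples intersect non-trivially inside $L_{(\alpha)}$ still have tuples intersecting non-trivially inside $S_{(\alpha)}$.

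Armed with this lemma, the induction unpacks cleanly. The base case $k=1$ follows by applying the one-step lemma to the initial complete graphs on $L$ and $S$. For the inductive step, write
\[L_{[\sigma(1),\ldots,\sigma(k)]} = \bigcap_{i=1}^{k} L_{[\sigma(1),\ldots,\hat{\sigma(i)},\ldots,\sigma(k)](\sigma(i))},\]
and similarly for $S$. The inductive hypothesis supplies the inclusion $L_{[\sigma(1),\ldots,\hat{\sigma(i)},\ldots,\sigma(k)]} \subseteq S_{[\sigma(1),\ldots,\hat{\sigma(i)},\ldots,\sigma(k)]}$ together with edge inclusion for each $i$; the one-step lemma applied to each of the $k$ components then yields the matching inclusion after the extra $(\sigma(i))$ step, and intersecting preserves both the polynomial inclusion and the edge inclusion. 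Finally, since $S_{[\sigma(1),\ldots,\sigma(k)]}$ is linear in $\sigma(k+1)$ by reducibility of $S$, the subset $L_{[\sigma(1),\ldots,\sigma(k)]}$ inherits linearity, yielding reducibility of $L$.

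The main obstacle is the compatibility-graph bookkeeping: the 2-tuples are defined relative to indices of the current set, and one has to commit to a consistent identification between indices of $L$ and of $S$ before the edge-inclusion claim even makes sense. This is essentially a notational hurdle rather than a mathematical one, but it is the only place the argument can slip, and taking the view that 2-tuples point at polynomials (with $0$ and $\infty$ as special markers) rather than at integer labels is what lets both conditions propagate through the induction without collision.
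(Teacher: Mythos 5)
Your proof is correct and takes essentially the same approach as the paper: both proceed by induction on the number of reduction steps, maintain the double invariant of set inclusion $L_{[\ldots]}\subseteq S_{[\ldots]}$ together with the subgraph relation for the compatibility graphs, and then unwind the intersection in the definition of $S_{[\sigma(1),\ldots,\sigma(k)]}$. The one-step monotonicity lemma you isolate is exactly what the paper's inductive step proves inline, and your device of indexing the $2$-tuples by polynomials rather than by positional indices makes explicit the identification the paper uses implicitly when it argues that the polynomials giving $m$ and $n$ their tuples in $L$ also exist in $S$.
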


\begin{proof}

Consider the sequence of sets and compatibility graphs \[(S,C_{S}), (S_{[\sigma(1)]},C_{[\sigma(1)]}),\ldots,(S_{[\sigma(1),\ldots,\sigma(r-1)]},C_{[\sigma(1),\ldots,\sigma(r-1)]}).\]

 We claim that that for all $i \in \{0,1,\ldots,r-1\}$, the set $L_{[\sigma(1),\ldots,\sigma(i)]} \subseteq S_{[\sigma(1),\ldots,\sigma(i)]}$ and that $C_{L_{[\sigma(1),\ldots,\sigma(i)]}}$ is a subgraph of $C_{S_{[\sigma(1),\ldots,\sigma(i)]}}[L_{[\sigma(1),\ldots,\sigma(i)]}]$ (that is, a subgraph of the graph induced by the polynomials $L_{[\sigma(1),\ldots,\sigma(i)]}$ in the graph $C_{S_{[\sigma(1),\ldots,\sigma(i)]}}$). When $i=0$ we consider the sets $(S,C_{S})$ and $(L,C_{L})$. We proceed by induction on $i$.
 
The base case follows trivially. Now consider the set $L_{[\sigma(1),\ldots,\sigma(i-1)]}$ and the compatibility graph $L_{C_{[\sigma(1),\ldots,\sigma(i-1)]}}$. By induction we have $L_{[\sigma(1),\ldots,\sigma(i-1)]} \subseteq S_{[\sigma(1),\ldots,\sigma(i-1)]}$ and that $C_{L_{[\sigma(1),\ldots,\sigma(i-1)]}}$ is a subgraph of $C_{S_{[\sigma(1),\ldots,\sigma(i-1)]}}[L_{[\sigma(1),\ldots,\sigma(i-1)]}]$. We will now apply one step of the reduction algorithm to these sets. 

We claim that for all $1 \leq j \leq i$, we have $ L_{[\sigma(1),\ldots,\hat{\sigma(j)},\ldots,\sigma(i)](\sigma(j))} \subseteq S_{[\sigma(1),\ldots,\hat{\sigma(j)},\ldots,\sigma(i)](\sigma(j))}$ when the set  $S_{[\sigma(1),\ldots,\hat{\sigma(j)},\ldots,\sigma(i)](\sigma(j))}$ exists. We know that $S$ is reducible with respect to $\sigma$, so there exists a $j \in \{1,\ldots,i\}$ such that $S_{[\sigma(1),\ldots,\hat{\sigma(j)},\ldots,\sigma(i)](\sigma(j))}$ exists. Fix any such $j \in \{1,\ldots,i\}$. Then by induction we have 

\[L_{[\sigma(1),\ldots,\hat{\sigma(j)},\ldots,\sigma(i)]} \subseteq S_{[\sigma(1),\ldots,\hat{\sigma(j)},\ldots,\sigma(i)]},\] and $C_{L_{[\sigma(1),\ldots,\hat{\sigma(j)},\ldots,\sigma(i)]}}$ is a subgraph of $C_{S_{[\sigma(1),\ldots,\hat{\sigma(j)},\ldots,\sigma(i)]}}[L_{[\sigma(1),\ldots,\hat{\sigma(j)},\ldots,\sigma(i)]}].$

Note as $S_{[\sigma(1),\ldots,\hat{\sigma(j)},\ldots,\sigma(i)](\sigma(j))}$ exists, all polynomials in $L_{[\sigma(1),\ldots,\hat{\sigma(j)},\ldots,\sigma(i)]}$ are linear in $\sigma(j)$. We will denote the set $L^{4}$ to be the set obtained in the third step of the reduction algorithm when starting with $L_{[\sigma(1),\ldots,\hat{\sigma(j)},\ldots,\sigma(i)]}$, and $S^{4}$ will be the set obtained in the second step of the reduction algorithm applied to $S_{[\sigma(1),\ldots,\hat{\sigma(j)},\ldots,\sigma(i)]}$.  Then $L^{4} \subseteq S^{4}$ as $L_{[\sigma(1),\ldots,\hat{\sigma(j)},\ldots,\sigma(i)]} \subseteq S_{[\sigma(1),\ldots,\hat{\sigma(j)},\ldots,\sigma(i)]}$ and as two polynomials in $L_{[\sigma(1),\ldots,\hat{\sigma(j)},\ldots,\sigma(i)]}$ are compatible then they are compatible in $ S_{[\sigma(1),\ldots,\hat{\sigma(j)},\ldots,\sigma(i)]}$.

 Let $\tilde{S}$ and $\tilde{L}$ be the sets obtained from step four of the reduction algorithm in for $S_{[\sigma(1),\ldots,\hat{\sigma(j)},\ldots,\sigma(i)]}$ and $L_{[\sigma(1),\ldots,\hat{\sigma(j)},\ldots,\sigma(i)]}$ respectively. Then as $L^{4} \subseteq S^{4}$, we have that  $\tilde{L} \subseteq \tilde{S}$.
 Now consider the compatibility graph $C_{L_{[\sigma(1),\ldots,\hat{\sigma(j)},\ldots,\sigma(i)](\sigma(j))}}$. Let $m$ and $n$ be adjacent in the compatibility graph. Then there is an $2$-tuple associated to $m$ and a $2$-tuple associated to  $n$ such that their intersection is non-empty. By construction, these $2$-tuples are derived from some polynomials in $L_{[\sigma(1),\ldots,\hat{\sigma(j)},\ldots,\sigma(i)](\sigma(j))}$. But we know that $L_{[\sigma(1),\ldots,\hat{\sigma(j)},\ldots,\sigma(i)](\sigma(j))} \subseteq S_{[\sigma(1),\ldots,\hat{\sigma(j)},\ldots,\sigma(i)](\sigma(j))}$ 
 and thus the polynomials which gave $m$ and $n$ the associated $2$-tuples in $L_{[\sigma(1),\ldots,\hat{\sigma(j)},\ldots,\sigma(i)](\sigma(j))}$ exist in $S_{[\sigma(1),\ldots,\hat{\sigma(j)},\ldots,\sigma(i)](\sigma(j))}$. Thus $mn \in E(C_{S_{[\sigma(1),\ldots,\hat{\sigma(j)},\ldots,\sigma(i)](\sigma(j))}})$. Therefore we have 
 $C_{L_{[\sigma(1),\ldots,\hat{\sigma(j)},\ldots,\sigma(i)](\sigma(j))}}$ is a subgraph of the graph induced by $L_{[\sigma(1),\ldots,\hat{\sigma(j)},\ldots,\sigma(i)](\sigma(j))}$ in $C_{S_{[\sigma(1),\ldots,\hat{\sigma(j)},\ldots,\sigma(i)](\sigma(j))}}$.  Therefore,
 
 \[L_{[\sigma(1),\ldots,\sigma(i)]} = \bigcap_{1 \leq j \leq k} L_{[\sigma(1),\ldots,\hat{\sigma(j)},\ldots,\sigma(i)](\sigma(j))} \subseteq \bigcap_{1 \leq j \leq k} S_{[\sigma(1),\ldots,\hat{\sigma(j)},\ldots,\sigma(i)](\sigma(j))} = S_{[\sigma(1),\ldots,\sigma(i)]},  \]
 
 and $C_{L_{[\sigma(1),\ldots,\sigma(i)]}}$ is a subgraph of $C_{S_{[\sigma(1),\ldots\sigma(i)]}}[L_{[\sigma(1),\ldots,\sigma(i)]}]$, completing the claim.
\end{proof}

In proving Theorem \ref{fullclaim}, we will move between some set of polynomial $S$ and the set of all irreducible polynomials of $S$ with respect to $\mathbb{Q}$. We now prove a lemma that shows we can essentially move between these two sets without impacting reducibility.

\begin{lemma}
\label{irreduciblepolynomials}
Let $S = \{f_{1},\ldots,f_{n}\}$ be a set of polynomials in the polynomial ring $\mathbb{Q}[\alpha_{1},\ldots,\alpha_{r}]$. Let $\sigma$ be a permutation of $\{\alpha_{1},\ldots,\alpha_{r}\}$. Let $S^{I}$ be the set of irreducible factors of $S$ with rational coefficients.  If $S$ is reducible with respect to $\sigma$ then $S^{I}$ is reducible with respect to $\sigma$. Furthermore, if $S^{I}$ is reducible with respect to $\sigma$,  and all polynomials in $S$ are linear in $\sigma(1)$, then $S$ is reducible with respect to $\sigma$. 
\end{lemma}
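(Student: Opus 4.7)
The plan is to prove the stronger one-step statement that, whenever every polynomial of $S$ is linear in $\sigma(1)$, applying one step of the reduction algorithm to $S$ and to $S^I$ with respect to $\sigma(1)$ produces the same set of polynomials and the same compatibility graph, i.e.\ $(S_{(\sigma(1))}, C_{(\sigma(1))}) = ((S^I)_{(\sigma(1))}, C^I_{(\sigma(1))})$. Because step~4 of the algorithm already replaces the working set by its irreducible factors, once this one-step equality is in hand the two reductions proceed in lock-step, and hence $S$ is reducible with respect to $\sigma$ if and only if $S^I$ is, provided each reduction can take its first step. In the first half of the lemma the linearity of $S$ in $\sigma(1)$ is automatic from $S$ being reducible; in the second half it is exactly the supplied hypothesis. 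To promote the $(\cdot)$-equality to the $[\cdot]$-equality I would induct on the number of variables processed so far, noting that the intersection defining $S_{[\sigma(1),\ldots,\sigma(k)]}$ ranges over the same finite set of orderings for $S$ and for $S^I$.

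For the equality of output sets, write each $f_i \in S$ as $f_i = p_i r_i$, where $p_i$ is the unique irreducible factor of $f_i$ involving $\sigma(1)$ (set to $1$ if there is none) and $r_i$ is the product of the remaining, $\sigma(1)$-constant irreducible factors. Then
\[
g_i = g_{p_i} r_i, \qquad h_i = h_{p_i} r_i, \qquad g_ih_j - h_ig_j = r_ir_j\bigl(g_{p_i}h_{p_j} - h_{p_i}g_{p_j}\bigr).
\]
From these identities the irreducible factors of $S^1 \cup S^2 \cup S^3$ computed from $S$ coincide with those computed from $S^I$: the irreducible factors of the $g_{p_i}$'s and $h_{p_i}$'s arise directly on both sides, and the $\sigma(1)$-constant irreducible factors inside the $r_i$'s are furnished on the $S^I$ side as the trivial $S^2$ contributions $h_p = p$ for each $\sigma(1)$-constant $p \in S^I$.

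For the compatibility graph I would run a case analysis on the $2$-tuples. A polynomial $m$ in the output carries the $S$-side index $i$ precisely when it divides $g_i$, $h_i$, or $g_ih_j - h_ig_j$ for some $j$, and I claim that in each such case $m$ picks up an $S^I$-side tuple involving the index of some irreducible factor of $f_i$. The direct cases (factors of $g_{p_i}$, $h_{p_i}$, or of the core expression $g_{p_i}h_{p_j} - h_{p_i}g_{p_j}$) are immediate, and the remaining $S$-edges arising from a polynomial dividing the bulk factor $r_ir_j$ of $g_ih_j - h_ig_j$ are supplied on the $S^I$ side by the trivial $S^3$ contributions $g_{p_i}h_{p_\ell} - h_{p_i}g_{p_\ell} = g_{p_i} p_\ell$ obtained by pairing $p_i$ with a $\sigma(1)$-constant irreducible $p_\ell$ of $r_i$. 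The converse direction is simpler: any shared $S^I$-index corresponds to an irreducible factor of some $f_i$, which forces a matching $S$-index $i$ for both polynomials. The main obstacle will be verifying this tuple bookkeeping in every case, in particular making sure that the trivial $S^3$ terms on the $S^I$ side exactly reproduce the edges contributed by the $r_ir_j$ factor on the $S$ side; once this has been checked, the rest of the proof is a mechanical induction building on Lemma~\ref{subsetlemma}.
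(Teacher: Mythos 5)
Your core algebraic identities are correct and match the paper's: writing $f_i = p_i r_i$ with $p_i$ the unique irreducible factor involving $\sigma(1)$ gives $g_i = g_{p_i}r_i$, $h_i = h_{p_i}r_i$, and $g_ih_j - h_ig_j = r_ir_j(g_{p_i}h_{p_j} - h_{p_i}g_{p_j})$, and your observation that the constant factors of $r_i$ are recovered on the $S^I$ side as $S^2$ contributions (and that the "bulk" edges from $r_ir_j$ are recovered from the $S^3$ pairings of $p_i$ with constant factors) is the right mechanism. Where your plan diverges from the paper is in granularity: you pass from $S$ to $S^I$ in one leap and try to establish a single one-step equality of both the output set and the compatibility graph, whereas the paper replaces $S$ by $L = S\setminus\{p\}\cup\{p_1,p_2\}$ — splitting one polynomial into two (not necessarily irreducible) factors at a time — proves the one-step equality $S_{[\sigma(1)]} = L_{[\sigma(1)]}$ with isomorphic compatibility graphs for that local change, and then reaches $S^I$ by iterating. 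The iterative route keeps the $2$-tuple bookkeeping to a short case analysis (exactly two cases, according to whether the split polynomial is $\sigma(1)$-constant or $\sigma(1)$-linear), because only one vertex of the compatibility graph is being perturbed at a time; your all-at-once route must track every shared constant factor, every pair of indices collapsing in $S^I$, and every mixed $S^3$ term simultaneously, which is exactly the bookkeeping you flag as "the main obstacle" but do not carry out. Your plan is therefore a plausible and arguably more symmetric reorganization of the same idea, but the part of the argument that constitutes the actual work — the case analysis establishing that every $S$-side adjacency has an $S^I$-side counterpart and conversely — is only sketched, and the paper's one-factor-at-a-time decomposition is specifically what makes that verification manageable.

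One further point worth noting: both your sketch and the paper's proof conclude reducibility of the whole sequence from the one-step $(\cdot)$-equality, but the $[\cdot]$-construction intersects over $(\cdot)$-reductions in all orders of the processed variables, not just those that start with $\sigma(1)$. So what is really needed is the one-step equality (or at least a containment, via Lemma~\ref{subsetlemma}) for every variable $\sigma(j)$ in which both $S$ and its factored version are linear, together with the observation that factoring can only create more terms in the intersection, never fewer; your last sentence gestures at this but does not spell it out. This is not a flaw unique to your proposal — the paper is terse at the same point — but if you intend to promote the one-step equality to the full $[\cdot]$-equality by induction, that is the step where the argument actually needs to be made explicit.
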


\begin{proof}

Suppose $S$ is reducible with respect to $\sigma$. Let 
$p \in S$ and $p = p_{1}p_{2}$ where $p_{1},p_{2}$ are polynomials in $\mathbb{Q}[\alpha_{1},\ldots,\alpha_{r}]$. Let $L = S \setminus \{p\} \cup \{p_{1},p_{2}\}$. We claim $L$ is reducible with respect to $\sigma$. Note this  proves the first statement of the lemma as one can repeatedly apply this fact. We will show that $S_{[\sigma(1)]} = L_{[\sigma(1)]}$ and that their compatibility graphs are isomorphic. We consider two cases.

\textbf{Case 1:} Suppose $\deg(p,\sigma(1)) =0$. Then  $\deg(p_{i},\sigma(1)) =0$ for $i \in \{1,2\}$. We perform the first iteration of the reduction algorithm on $S$ and $L$. Let $L^{4}$ and $S^{4}$ be the sets obtained in step three of the reduction algorithm for $\sigma(1)$ for $L$ and $S$ respectively. As $\deg(p,\sigma(1)) = 0$, we have that $p_{1},p_{2} \in L^{4}$ and $p \in S^{4}$. Notice that the polynomial $g_{i}h_{j} - g_{j}h_{i} = pg_{i}$ when the $h_{j}$, $g_{j}$ polynomials are obtained from $p$, as then $g_{j}$ is $0$. Similarly, when $h_{j}$, $g_{j}$ is obtained from $p_{1}$ or $p_{2}$, we get that $g_{i}h_{j} - g_{j}h_{i}$ is  $p_{1}g_{i}$ or $p_{2}g_{i}$ respectively. Then after factoring, we have  $\tilde{L} = \tilde{S}$, and so $L_{[\sigma(1)]} = S_{[\sigma(1)]}$. Thus it suffices to show $C_{L_{(\sigma(1))}}$ is isomorphic to $C_{S_{(\sigma(1))}}$.

As $L = S \setminus \{p\} \cup \{p_{1},p_{2}\}$, the graphs $C_{L_{[\sigma(1)]}}$ and $C_{S_{[\sigma(1)]}}$ have the same vertex set, and we may restrict our attention to compatibilities caused by $p$, $p_{1}$ and $p_{2}$. Notice the irreducible factors of $p$, $p_{1}$ and $p_{2}$ all contain an associated $2$-tuple containing $\infty$ and $0$, thus we may restrict our attention to $2$-tuples not containing $\infty$ or $0$.

Suppose an irreducible factor of $p$, say $p'$, is adjacent to a vertex $g'$ in $S_{C_{\sigma(1)}}$, where $g'$ is an irreducible factor of some $g_{i}$. Since we are assuming that the compatibility did not arise from a $2$-tuple  containing $\{0\}$ or $\{\infty\}$ we may assume that compatibilities comes from the polynomial $pg_{j}$. Then without loss of generality, $p'$ is an irreducible factor of  $p_{1}$ and so $p_{1}g_{j}$ would provide the desired $2$-tuple for compatibility. The same argument works in the other direction as well, so we have that $C_{L_{[\sigma(1)]}}$ is isomorphic to $C_{S_{[\sigma(1)]}}$. As $S_{[\sigma(1)]} = L_{[\sigma(1)]}$ and their compatibility graphs are the same, since $S$ is reducible, $L$ is reducible.

\textbf{Case 2:} Suppose $\deg(p,\sigma(1)) =1$. Notice that as $p$ is linear in $\sigma(1)$, exactly one of $p_{1}$ or $p_{2}$ is linear in $\sigma(1)$. Without loss of generality, we assume $\deg(p_{1},\sigma(1)) =1$.
 
 As before let $S^{4}, \tilde{S}$ and $L_{4},\tilde{L}$ denote the sets obtained from the third and fourth step of the reduction for $S$ 
 and $L$ respectively. Let $p_{1} = g_{1}\sigma(1) +h_{1}$. Then $p = p_2(g_{1}\sigma(1) + h_{1})$. Thus  $p_{2},h_{1},g_{1} \in L^{4}$ , and $p_{2}g_{1}, p_{2}h_{1} \in S^{4}$.  Notice that the polynomial $g_{i}h_{j} - h_{i}g_{j} = p_2(g_{1}h_{j}-h_{1}g_{j})$ when the $g_{i}, h_{i}$ polynomials are obtained from $p$. Thus  $S_{[\sigma(1)]}$ will contain all the irreducible factors obtained from $p_{1}$ and $p_{2}$. Similarly, $L_{[\sigma(1)]}$ will contain all the irreducible factors obtained from $p$, so $S_{[\sigma(1)]} = L_{[\sigma(1)]}$. 
 
For the compatibility graphs, as before all irreducible factors of $p$ are adjacent in $C_{S_{[\sigma(1)]}}$ as they share an $2$-tuple from $p$. In $C_{L_{[\sigma(1)]}}$, all irreducible factors of either $p_{1}$ or $p_{2}$ are adjacent. This follows since $\deg(p_{2},\sigma(1))=0$, all irreducible factors of $p_{2}$ have a $2$-tuple with $0$ and a $2$-tuple with $\infty$, and every irreducible factor of $p_{1}$ has a $2$-tuple containing either $0$ or $\infty$. If $f$ and $g$ are adjacent in $C_{S_{[\sigma(1)]}}$ and the $2$-tuples which make them adjacent came from a polynomial $p_{2}(g_{1}h_{j}-h_{1}g_{j})$, then the desired $2$-tuples exist for $C_{L_{[\sigma(1)]}}$ as the polynomials $p_{2}g_{1} \in L^{4}$ and $g_{1}h_{j}-h_{1}g_{j} \in L^{4}$. A similar statement holds for two adjacent polynomials in $C_{L_{[\sigma(1)]}}$. Therefore one can see that $C_{L_{[\sigma(1)]}}$ is isomorphic to $C_{S_{[\sigma(1)]}}$, and thus as $S$ is reducible with respect to $\sigma$, $L$ is reducible with respect to $\sigma$. Notice for the partial converse, the same argument works, as the only obstruction is if the reduction algorithm stops immediately.
\end{proof}

We remark that in the above proof we did not rely on the fact that the initial compatibility graph is complete, so one can apply this argument in the ``middle" of a reduction if desired. Now we are in position to prove Theorem \ref{fullclaim} (split into two theorems as the arguments are slightly different).

\begin{theorem}
\label{reducibilityevalutation}
Let $S = \{P_{1},\ldots,P_{N}\}$ be a set of polynomials which is reducible in the order $(\alpha_{1},\ldots, \alpha_{n})$. Fix some $l \in \{1,\ldots,n\}$. Then the set $S^{l} = \{P_{1}|_{\alpha_{l} =0}, \ldots, P_{N}|_{\alpha_{l}=0}\}$ is reducible in the order $(\alpha_{1},\ldots, \alpha_{n})$.
\end{theorem}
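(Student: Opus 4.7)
The plan is to run the reduction algorithms for $S$ and $S^l$ in parallel and prove by induction on the reduction step $k$ that every polynomial in $S^l_{[\sigma(1),\ldots,\sigma(k)]}$ is, up to a nonzero rational factor, an irreducible factor of $P|_{\alpha_l=0}$ for some $P \in S_{[\sigma(1),\ldots,\sigma(k)]}$. Once this inclusion is established, the theorem follows: if every $P \in S_{[\sigma(1),\ldots,\sigma(k)]}$ is linear in $\sigma(k+1)$ (by reducibility of $S$), then so is $P|_{\alpha_l=0}$, since evaluation cannot increase the degree in any variable other than $\alpha_l$, and so is every irreducible factor thereof. Lemma \ref{irreduciblepolynomials} then permits us to pass freely between $S^l$ and its set of irreducible factors over $\mathbb{Q}$, so that the reducibility statement matches the algorithm's output.

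The inductive step splits according to whether the current variable $\sigma(k)$ equals $\alpha_l$. When $\sigma(k) \neq \alpha_l$, the key observation is that substitution $\alpha_l = 0$ is a ring homomorphism that commutes with (i) taking the partial derivative $\partial / \partial \sigma(k)$, (ii) the substitution $\sigma(k) = 0$, and (iii) the formation of cross-differences $g_i h_j - h_i g_j$. So each polynomial produced by one step of the $S^l$-reduction arises as the $\alpha_l=0$ evaluation of the corresponding polynomial produced by the $S$-reduction, and factoring into irreducibles over $\mathbb{Q}$ only refines this. When $\sigma(k) = \alpha_l$, the inductive hypothesis forces every polynomial in $S^l_{[\sigma(1),\ldots,\sigma(k-1)]}$ to be constant in $\alpha_l$, so step 3 of the algorithm returns just the irreducible factors of $S^l_{[\sigma(1),\ldots,\sigma(k-1)]}$ and the inclusion persists trivially.

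The main obstacles are the intersection defining $S_{[\sigma(1),\ldots,\sigma(k)]} = \bigcap_i S_{[\sigma(1),\ldots,\hat{\sigma(i)},\ldots,\sigma(k)](\sigma(i))}$ and the compatibility graph. For the intersection, the image under evaluation at $\alpha_l = 0$ satisfies $\{P|_{\alpha_l=0} : P \in A \cap B\} \subseteq \{P|_{\alpha_l=0} : P \in A\} \cap \{P|_{\alpha_l=0} : P \in B\}$, which is precisely the direction of inclusion we need. For the compatibility graph, the concern is that whenever a cross-difference is formed on the $S^l$-side, the analogous difference must be formed on the $S$-side so that its evaluation is available as a pre-image. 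This requires showing that a compatibility in $C_{S^l_{[\sigma(1),\ldots,\sigma(k-1)]}}$ implies the corresponding compatibility in $C_{S_{[\sigma(1),\ldots,\sigma(k-1)]}}$; since the $2$-tuples are inherited from the same indexing of pre-image polynomials at each step, shared $2$-tuples on the $S^l$-side always come from shared $2$-tuples on the $S$-side, so the $S^l$-compatibility graph is effectively a subgraph of the $S$-compatibility graph under the bookkeeping, and every cross-difference appearing in the $S^l$-reduction has a matching pre-image cross-difference in the $S$-reduction.
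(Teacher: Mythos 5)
Your overall strategy is recognizably the same as the paper's: show that evaluation at $\alpha_l=0$ essentially commutes with a single reduction step (derivatives in a different variable, evaluation, cross-differences, and factoring), and conclude linearity from the linearity of the un-evaluated polynomials. The paper also exploits exactly this commutation, culminating in the same sort of long chain of set manipulations. But where you induct directly on the reduction step $k$ and try to carry the invariant ``every polynomial in $S^l_{[\sigma(1),\ldots,\sigma(k)]}$ is an irreducible factor of $P|_{\alpha_l=0}$ for some $P \in S_{[\sigma(1),\ldots,\sigma(k)]}$'' through the whole algorithm, the paper instead runs a minimal-counterexample induction on the number of variables $n$. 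It only ever analyzes one explicit reduction step (that for $\alpha_1$), then invokes the theorem recursively on $S_{[\alpha_1]}$ together with Lemma~\ref{subsetlemma} and Lemma~\ref{irreduciblepolynomials}.

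The reason that difference matters is that your invariant does not survive the intersection
\[
S_{[\sigma(1),\ldots,\sigma(k)]} = \bigcap_{1\le i\le k} S_{[\sigma(1),\ldots,\hat{\sigma(i)},\ldots,\sigma(k)](\sigma(i))},
\]
and the inclusion you cite points the wrong way for it to help. If $q$ lies in every $S^l_{[\sigma(1),\ldots,\hat{\sigma(i)},\ldots,\sigma(k)](\sigma(i))}$, then for each $i$ you get some witness $P_i \in S_{[\sigma(1),\ldots,\hat{\sigma(i)},\ldots,\sigma(k)](\sigma(i))}$ with $q$ dividing $P_i|_{\alpha_l=0}$; but these witnesses can all be different polynomials, and there is no reason any single $P$ in the intersection $\bigcap_i S_{[\sigma(1),\ldots,\hat{\sigma(i)},\ldots,\sigma(k)](\sigma(i))}$ has $q$ as a factor of $P|_{\alpha_l=0}$. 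The set inclusion $\{P|_{\alpha_l=0}: P\in A\cap B\}\subseteq\{P|_{\alpha_l=0}: P\in A\}\cap\{P|_{\alpha_l=0}: P\in B\}$ is true but says nothing about this; what you would need is a coherence between the witnesses $P_i$, which your argument does not supply. Lemma~\ref{subsetlemma} is immune to this problem precisely because its invariant is a genuine set-containment $L_{[\cdots]}\subseteq S_{[\cdots]}$ (together with a subgraph relation for the compatibility graphs), and set containment does push through intersections; a ``divides some image'' relation does not.

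A secondary issue is the compatibility-graph step. You assert that ``shared $2$-tuples on the $S^l$-side always come from shared $2$-tuples on the $S$-side,'' but after factoring the polynomials on the two sides need not be in bijection (an irreducible polynomial on the $S$-side can split after evaluation, and an irreducible evaluated polynomial can come from several pre-images), so the $2$-tuple bookkeeping does not line up as cleanly as you claim. The paper isolates exactly this difficulty into Lemma~\ref{irreduciblepolynomials}, which shows replacing a polynomial by its factors preserves both the resulting set and, crucially, the compatibility graph structure; combined with Lemma~\ref{subsetlemma}, that lets the paper reduce the whole problem to a single reduction step and an invocation of the induction hypothesis, rather than trying to carry a fragile invariant across every intersection. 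To repair your proof you would either need to restructure it along the paper's lines (minimal counterexample on $n$, peeling off one variable and citing Lemmas~\ref{subsetlemma} and \ref{irreduciblepolynomials}) or strengthen the invariant so it is a true subset-plus-subgraph relation rather than a ``factor of some image'' relation.
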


\begin{proof}
Let $S$ and $S^{l}$ be a counterexample with $n$ minimized, as in, $S$ is reducible with order $(\alpha_{1},\ldots,\alpha_{n})$, but $S^{l}$ is not reducible with order $(\alpha_{1},\ldots,\alpha_{n})$. 

First suppose that $l= 1$. Then for all polynomials $P \in S$, we have that $\deg(P,\alpha_{l}) \leq 1$. Then notice that $S^{l} = S^{2}$ where $S^{2}$ is the set obtained by applying the reduction algorithm to $S$ for $\alpha_{l}$. Then $S^{l}_{[\alpha_{l}]} \subseteq S_{[\alpha_{1}]}$. Then by Lemma \ref{subsetlemma}, we have that $S^{l}_{[\alpha_{l}]}$ is reducible with order $(\alpha_{2},\ldots,\alpha_{n})$, which implies that $S^{l}$ is reducible with order $(\alpha_{1},\ldots,\alpha_{n})$, a contradiction.

 Therefore we assume that $l \neq 1$ and consider one step of the reduction algorithm. As $S$ is reducible with order $(\alpha_{1},\ldots,\alpha_{n})$, we have that $S_{[\alpha_{1}]}$ is reducible with order $(\alpha_{2},\ldots,\alpha_{n})$. Consider the set $S^{l'}_{[\alpha_{1}]} = \{f|_{\alpha_{l}=0} | f \in S_{[\alpha_{1}]} \}$.  Then since $S$ is a minimal counterexample with respect to $n$, we have that $S^{l'}_{[\alpha_{1}]}$ is reducible with order $(\alpha_{2},\ldots,\alpha_{n})$. Additionally, by Lemma \ref{irreduciblepolynomials}, the set of irreducible polynomials of  $S^{l'}_{[\alpha_{1}]}$  is reducible with order $(\alpha_{2},\ldots,\alpha_{n})$. Notice from the definitions we have:
\begin{align*}
S^{l'}_{[\alpha_{1}]} &= \{f|_{\alpha_{l}=0} | f \in S_{[\alpha_{1}]} \} 
\\&=
\{f|_{\alpha_{l}=0} | f \in \text{irreducible factors of } S^{4} \}.
\end{align*}

 For notational convenience, we will say $S^{jl}$ will be the set $S^{j}$ obtained from the reduction algorithm by starting with $S^{l}$ for $j \in \{1,2,3,4\}$. Suppose we have a polynomial $f$, and $f = f_{1}f_{2}$ for some polynomials $f_{1}$ and $f_{2}$. Then fix any variable $\alpha$ and notice that  $f|_{\alpha =0} = f_{1}|_{\alpha = 0}f_{2}|_{\alpha =0}$. This follows since any term which contains $\alpha$ in $f$ is generated by a pair of terms in $f_{1}$ and $f_{2}$, where at least one of these terms contains $\alpha$. Now, notice that $S^{l}_{[\alpha_{1}]}$ exists and,

\begin{align*}
S^{l}_{[\alpha_{1}]} &= \text{irreducible factors of } S^{4l}
 \\ &=
\text{irreducible factors of } (S^{1l} \cup S^{2l} \cup S^{3l}) 
\\ &=
\text{irreducible factors of } (\{ \frac{\partial f}{\partial \alpha_{1}} | f \in S^{l} \} \cup    \{f|_{\alpha_{1}=0} | f \in S^{l} \} \cup \\& \qquad \{\frac{\partial f_{1}}{\partial \alpha_{1}}f_{2}|_{\alpha_{1} = 0} - \frac{\partial f_{2}}{\partial \alpha_{1}}f_{1}|_{\alpha_{1}=0} | f_{1},f_{2} \in S^{l} \}) 
\\&=
\text{irreducible factors of } (\{ \frac{\partial f|_{\alpha_{l}=0}}{\partial \alpha_{1}}  | f \in S\} \cup    \{f|_{\alpha_{l}, \alpha_{1}=0} | f \in S\} \cup \\& \qquad \{\frac{\partial f_{1}|_{\alpha_{l}=0}}{\partial \alpha_{1}}f_{2}|_{\alpha_{l},\alpha_{1} = 0} - \frac{\partial f_{2}|_{\alpha_{l}=0}}{\partial \alpha_{1}}f_{1}|_{\alpha_{l},\alpha_{1}=0} | f_{1},f_{2} \in S \}) 
\\ &=
\text{irreducible factors of } (\{ \frac{\partial f}{\partial \alpha_{1}}|_{\alpha_{l}=0}  | f \in S\} \cup    \{f|_{\alpha_{l}, \alpha_{1}=0} | f \in S\} \cup \\& \qquad \{\frac{\partial f_{1}}{\partial \alpha_{1}}|_{\alpha_{l}=0}f_{2}|_{\alpha_{l},\alpha_{1} = 0} - \frac{\partial f_{2}}{\partial \alpha_{1}}|_{\alpha_{l}=0}f_{1}|_{\alpha_{l},\alpha_{1}=0} | f_{1},f_{2} \in S \})
\\&= \text{irreducible factors of } (\{ \frac{\partial f}{\partial \alpha_{1}}|_{\alpha_{l}=0}  | f \in S\} \cup    \{f|_{\alpha_{l}, \alpha_{1}=0} | f \in S\} \cup \\& \qquad \{(\frac{\partial f_{1}}{\partial \alpha_{1}}f_{2}|_{\alpha_{1} = 0} - \frac{\partial f_{2}}{\partial \alpha_{1}}f_{1}|_{\alpha_{1}=0})|_{\alpha_{l} =0} | f_{1},f_{2} \in S \}) \\
& \subseteq  \text{irreducible factors of } \{f|_{\alpha_{l}=0} | f \in \text{irreducible factors of } S^{4} \} \\
&= \text{irreducible factors of } S^{l'}_{[\alpha_{1}]}.
\end{align*}

By previous observations, the set of irreducible factors of $S^{l'}_{[\alpha_{1}]}$ is reducible with order $(\alpha_{2},\ldots,\alpha_{n})$, so by Lemma \ref{subsetlemma}, we get that  $S^{l}_{[\alpha_{1}]}$ is reducible with order $(\alpha_{2},\ldots,\alpha_{n})$, completing the proof. 
\end{proof}

Before we can prove the analogous statement for leading coefficients, we need to make a few more claims.

\begin{proposition}
\label{removingconstants}
Let $S = S' \cup \{c\}$ for some set of polynomials $S'$ and any constant $c$. Then $S$ is reducible if and only if $S'$ is reducible. 
\end{proposition}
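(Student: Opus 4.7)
The forward direction is immediate from Lemma \ref{subsetlemma}, since $S' \subseteq S$. For the converse, I plan to assume $S'$ is reducible with respect to some ordering $\sigma$ and show that $S$ is also reducible with respect to $\sigma$. The strategy is to argue that the constant $c$ is effectively ``invisible'' to the reduction algorithm: one step of the algorithm applied to $S$ with any variable $\sigma(j)$ produces a polynomial set and compatibility graph that agree with those coming from $S'$ (possibly up to the constant $c$ itself, which is linear in every variable and therefore never triggers the termination check in Step 1).

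For the one-step claim, I would explicitly compute the contributions $c$ makes to each of $S^{1}, S^{2}, S^{3}$ when running the algorithm for $S$: in $S^{1}$ we get $\partial c / \partial \sigma(j) = 0$; in $S^{2}$ we get $c|_{\sigma(j) = 0} = c$; and in $S^{3}$ we get the new Pl\"ucker-type polynomials $g_{c} h_{f} - h_{c} g_{f} = -c\, g_{f}$ for each $f \in S'$. The zero polynomial and the unit $c$ contribute no new irreducible factors to $\tilde{S}$, while $-c\, g_{f}$ has exactly the same irreducible factors as $g_{f}$, which are already present in $\tilde{S}$ from $S^{1}$ applied to $S'$. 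Hence $S_{[\sigma(j)]}$ and $S'_{[\sigma(j)]}$ agree, up to the possible retention of $c$.

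For the compatibility graph, the only new structure introduced by $c$ is an additional $2$-tuple $\{i_{c}, k\}$ attached to each irreducible factor of $g_{k}$, coming from the new Pl\"ucker term $-c\, g_{k}$. Any edge this could create must connect two vertices via the shared index $i_{c}$, which means both are irreducible factors of $g_{k}$ and $g_{\ell}$ for some $k, \ell$. But such vertices are already adjacent in $C_{S'_{[\sigma(j)]}}$ via the shared index $0$ inherited from membership in $S^{1}$, so no new edges appear.

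Applying this one-step claim to each factor in the recursive definition $S_{[\sigma(1), \ldots, \sigma(k)]} = \bigcap_{i} S_{[\sigma(1), \ldots, \hat{\sigma(i)}, \ldots, \sigma(k)](\sigma(i))}$, a straightforward induction on $k$ shows that the two reductions agree at every stage with matching compatibility graphs (up to the possible persistence of the constant $c$). Since $c$ is linear in every variable, the linearity condition defining reducibility transfers between $S$ and $S'$. The only delicate point is the convention for how constants behave in the irreducible-factor set $\tilde{S}$, since a nonzero rational is a unit and has no irreducible factorization in the usual sense; either convention (discard or retain) makes the argument go through, so this is bookkeeping rather than a genuine obstacle.
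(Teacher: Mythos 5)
Your proof is correct and follows essentially the same route as the paper: the forward direction by Lemma~\ref{subsetlemma}, and the converse by tracing $c$ through one reduction step ($0 \in S^1$, $c \in S^2$, $g_c h_f - h_c g_f = -cg_f$ contributing nothing new after factoring) and then inducting. Your compatibility-graph analysis is actually somewhat more careful than the paper's (which simply asserts $c$ is adjacent to everything), and your closing remark on the bookkeeping conventions for constants is a sensible addition, but the substance of the argument is the same.
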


\begin{proof}

If $S$ is reducible, then the result follows from Lemma \ref{subsetlemma}.

Now suppose $S'$ is a set of polynomials in the polynomial ring $\mathbb{Q}[\alpha_{1},\ldots,\alpha_{r}]$ which is reducible for some permutation $\sigma$ of $\alpha_{1},\ldots,\alpha_{r}$.  For $\sigma(1)$, the polynomial $c$ forces $c \in S^{2}$ and $0 \in S^{1}$. Thus the polynomial $g_{i}h_{j} - h_{i}g_{j} = cg_{j}$ when $f_{i} = c$. Then after factoring over $\mathbb{Q}$, we get $\tilde{S} = \{c, S'_{\sigma(1)}\}$. This implies in the compatibility graph, $c$ is adjacent to all other polynomials. Notice that this happens at every step of the algorithm. Then since $c$ is linear in every variable, by induction $S$ is  reducible.

\end{proof}

\begin{lemma}
\label{equationgrind}
Let $f_{1}$ and $f_{2}$ be polynomials linear in $\alpha_{1}$. Fix a variable $\alpha_{l}$ such that $\alpha_{l} \neq \alpha_{1}$. For a polynomial $f$, let $lc(f)$ be the leading coefficient of $f$ with respect to $\alpha_{l}$.  Then either
\begin{enumerate}
 \item{$\frac{\partial lc(f_{1})}{\partial \alpha_{1}}lc(f_{2})|_{\alpha_{1} = 0} - \frac{\partial lc(f_{2})}{\partial \alpha_{1}}lc(f_{1})|_{\alpha_{1}=0} = 0$ or}
 \item{ $\frac{\partial lc(f_{1})}{\partial \alpha_{1}}lc(f_{2})|_{\alpha_{1} = 0} - \frac{\partial lc(f_{2})}{\partial \alpha_{1}}lc(f_{1})|_{\alpha_{1}=0} = lc(\frac{\partial f_{1}}{\partial \alpha_{1}} f_{2}|_{\alpha_{1}=0} - \frac{\partial f_[2}{\partial \alpha_{1}} f_{1}|_{\alpha_{1} = 0}).$}
\end{enumerate}
\end{lemma}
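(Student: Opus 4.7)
The plan is to do a direct case analysis by first writing $f_i = g_i \alpha_1 + h_i$ where $g_i = \partial f_i/\partial \alpha_1$ and $h_i = f_i|_{\alpha_1=0}$. Because $f_i$ is linear in $\alpha_1$, neither $g_i$ nor $h_i$ depends on $\alpha_1$. Now expand each as a polynomial in $\alpha_l$: write $g_i = G_i \alpha_l^{a_i} + (\text{lower order in } \alpha_l)$ and $h_i = H_i \alpha_l^{b_i} + (\text{lower order in } \alpha_l)$, where $G_i$ and $H_i$ are $\alpha_1$-independent polynomials. Then $lc(f_i)$ depends on which of $a_i, b_i$ is larger: it equals $G_i \alpha_1$ if $a_i > b_i$, equals $H_i$ if $b_i > a_i$, and equals $G_i \alpha_1 + H_i$ if $a_i = b_i$.

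With this setup, the proof reduces to checking the nine sub-cases given by the pair of orderings $(a_1 \text{ vs } b_1, a_2 \text{ vs } b_2)$. In each sub-case I would compute both expressions explicitly:
\begin{itemize}
\item If $a_i > b_i$ for both $i$, then $lc(f_i)|_{\alpha_1=0} = 0$ for both, so the RHS of option $(1)$ is $0$ and option $(1)$ holds.
\item If $b_i > a_i$ for both $i$, then $\partial lc(f_i)/\partial\alpha_1 = 0$ for both, so again option $(1)$ holds.
\item In the four genuinely mixed cases (one index strict, the other strict the opposite way, or one or both equal), exactly one of the two summands on the RHS of option $(2)$ survives, and a direct computation shows it equals $G_1 H_2$ or $-G_2 H_1$, the leading $\alpha_l^{a_1+b_2}$ or $\alpha_l^{a_2+b_1}$ coefficient of $g_1 h_2 - g_2 h_1$; comparing $a_1+b_2$ with $a_2+b_1$ confirms that this is exactly $lc(g_1 h_2 - g_2 h_1)$.
\item If $a_1 = b_1$ and $a_2 = b_2$, then $a_1 + b_2 = a_2 + b_1$ and the leading terms of $g_1 h_2$ and $g_2 h_1$ sit in the same $\alpha_l$-degree, giving $lc(g_1 h_2 - g_2 h_1) = G_1 H_2 - G_2 H_1$ (provided that is nonzero), which matches the RHS of option $(2)$.
\end{itemize}

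The main subtlety is the borderline situation in the last bullet where $G_1 H_2 - G_2 H_1 = 0$; here the $\alpha_l$-degree of $g_1 h_2 - g_2 h_1$ drops below $a_1 + b_2$, so the naive formula for $lc$ fails. The observation that saves the statement is that in precisely this situation the RHS of option $(2)$ is $G_1 H_2 - G_2 H_1 = 0$, so option $(1)$ holds. The same degeneracy handling covers any incidental cancellation in other cases. Since the conclusion is a disjunction, no single uniform formula is needed, and the case analysis terminates. Once all cases are checked, the lemma follows.
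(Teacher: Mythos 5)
Your proposal is correct and takes essentially the same approach as the paper: a direct case analysis on the top $\alpha_l$-degree coefficients of the $\alpha_1$-linear and $\alpha_1$-constant parts of $f_1,f_2$, dispatching the degenerate cancellation case via option (1). The paper parametrizes the cases by which of the four top-degree coefficients $f_{1,k,1},f_{1,k,2},f_{2,t,1},f_{2,t,2}$ vanish (with $k,t$ the $\alpha_l$-degrees of $f_1,f_2$), while you parametrize by the sign of $a_i-b_i$ where $a_i=\deg_{\alpha_l}g_i$, $b_i=\deg_{\alpha_l}h_i$; these case structures are in bijection, so the two proofs are the same argument in slightly different notation.
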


\begin{proof}
Let $f_{1} = \sum_{i=0}^{k} (f_{1,i,1}\alpha_{1} + f_{1,i,2})\alpha^{i}_{l}$ and $f_{2} = \sum_{j=0}^{t} (f_{2,j,1}\alpha_{1} + f_{2,j,2})\alpha^{j}_{l}$. Then 
\begin{align*}
\frac{\partial lc(f_{1})}{\partial \alpha_{1}}lc(f_{2})|_{\alpha_{1} = 0} - \frac{\partial lc(f_{2})}{\partial \alpha_{1}}lc(f_{1})|_{\alpha_{1}=0} = f_{1,k,1}f_{2,t,2} - f_{2,t,1}f_{1,k,2},
\end{align*}
and 
\begin{align*}
lc(\frac{\partial f_{1}}{\partial \alpha_{1}} f_{2}|_{\alpha_{1}=0} - \frac{\partial f_{2}}{\partial \alpha_{1}} f_{1}|_{\alpha_{1} = 0}) =
lc((\sum_{i=0}^{k}f_{1,i,1}\alpha_{l}^{i})(\sum_{j=0}^{t}f_{2,j,2}\alpha_{l}^{j}) - (\sum_{j=0}^{t} f_{2,j,1}\alpha_{l}^{j})(\sum_{i=0}^{k}f_{1,i,2}\alpha_{l}^{i})).
\end{align*}

Now we consider various cases. If $f_{1,k,1}f_{2,t,2} - f_{2,t,1}f_{1,k,2} =0$, then the claim immediately holds. Therefore we assume that $f_{1,k,1}f_{2,t,2} - f_{2,t,1}f_{1,k,2} \neq 0$. Suppose  each of $f_{1,k,1},f_{2,t,2},f_{2,t,1}$ and $f_{1,k,2}$ are non-zero, then the largest power of $\alpha_{l}$ in $(\sum_{i=0}^{k}f_{1,i,1}\alpha_{l}^{i})(\sum_{j=0}^{t}f_{2,j,2}\alpha_{l}^{j}) - (\sum_{j=0}^{t} f_{2,j,1}\alpha_{l}^{j})(\sum_{i=0}^{k}f_{1,i,2}\alpha_{l}^{i})$ is $\alpha_{l}^{k+t}$, and thus 
\[lc((\sum_{i=0}^{k}f_{1,i,1}\alpha_{l}^{i})(\sum_{j=0}^{t}f_{2,j,2}\alpha_{l}^{j}) - (\sum_{j=0}^{t} f_{2,j,1}\alpha_{l}^{j})(\sum_{i=0}^{k}f_{1,i,2}\alpha_{l}^{i})) = f_{1,k,1}f_{2,t,2} - f_{2,t,1}f_{1,k,2}.\] 

Now suppose that $f_{1,k,1} =0$ and $f_{2,t,2}, f_{2,t,1},$ and $f_{1,k,2}$ are not zero. Then the highest power of $\alpha_{l}$ in $(\sum_{i=0}^{k}f_{1,i,1}\alpha_{l}^{i})(\sum_{j=0}^{t}f_{2,j,2}\alpha_{l}^{j}) - (\sum_{j=0}^{t} f_{2,j,1}\alpha_{l}^{j})(\sum_{i=0}^{k}f_{1,i,2}\alpha_{l}^{i})$ is $\alpha_{l}^{k+t}$ which only occurs in  $(\sum_{j=0}^{t} f_{2,j,1}\alpha_{l}^{j})(\sum_{i=0}^{k}f_{1,i,2}\alpha_{l}^{i})$ so,

\[lc((\sum_{i=0}^{k}f_{1,i,1}\alpha_{l}^{i})(\sum_{j=0}^{t}f_{2,j,2}\alpha_{l}^{j}) - (\sum_{j=0}^{t} f_{2,j,1}\alpha_{l}^{j})(\sum_{i=0}^{k}f_{1,i,2}\alpha_{l}^{i})) =  -f_{2,t,1}f_{1,k,2}.\]

Now suppose that $f_{1,k,2} = 0$ and $f_{2,t,2}, f_{2,t,1},$ and $f_{1,k,2}$ are not zero. Then the highest power of $\alpha_{l}$ in $(\sum_{i=0}^{k}f_{1,i,1}\alpha_{l}^{i})(\sum_{j=0}^{t}f_{2,j,2}\alpha_{l}^{j}) - (\sum_{j=0}^{t} f_{2,j,1}\alpha_{l}^{j})(\sum_{i=0}^{k}f_{1,i,2}\alpha_{l}^{i})$ is $\alpha_{l}^{k+t}$ so, 

\[lc((\sum_{i=0}^{k}f_{1,i,1}\alpha_{l}^{i})(\sum_{j=0}^{t}f_{2,j,2}\alpha_{l}^{j}) - (\sum_{j=0}^{t} f_{2,j,1}\alpha_{l}^{j})(\sum_{i=0}^{k}f_{1,i,2}\alpha_{l}^{i})) =  f_{1,k,1}f_{2,t,2}.\]

The case where $f_{2,t,1} =0$ and $f_{2,t,2}, f_{1,k,1},$ and $f_{1,k,2}$ are non-zero and the case where $f_{2,t,2} = 0$ and $f_{2,t,1}$, $f_{1,k,1},f_{1,k,2}$ are non-zero follow similarly. 

Now consider the case where $f_{1,k,1} = 0$, $f_{2,t,1} =0$, and $f_{1,k,2}$, $ f_{2,t,1}$ are not zero. Then $f_{1,k,1}f_{2,t,2} - f_{2,t,1}f_{1,k,2} =0$,  satisfying the claim.

Now consider the case where $f_{1,k,2} =0$,$f_{2,t,2} =0$, and $f_{1,k,1}$,$f_{2,t,1}$ are not zero. Then $f_{1,k,1}f_{2,t,2} - f_{2,t,1}f_{1,k,2} =0$, satisfying the claim. 

Now consider the case where $f_{1,k,1} =0$, $f_{2,t,2} =0$ and $f_{1,k,2}$, $f_{2,t,1}$ are non-zero. Then the highest power of $\alpha_{l}$ in $(\sum_{i=0}^{k}f_{1,i,1}\alpha_{l}^{i})(\sum_{j=0}^{t}f_{2,j,2}\alpha_{l}^{j}) - (\sum_{j=0}^{t} f_{2,j,1}\alpha_{l}^{j})(\sum_{i=0}^{k}f_{1,i,2}\alpha_{l}^{i})$ is $\alpha_{l}^{k+t}$ so, 

\[lc((\sum_{i=0}^{k}f_{1,i,1}\alpha_{l}^{i})(\sum_{j=0}^{t}f_{2,j,2}\alpha_{l}^{j}) - (\sum_{j=0}^{t} f_{2,j,1}\alpha_{l}^{j})(\sum_{i=0}^{k}f_{1,i,2}\alpha_{l}^{i})) =  -f_{2,t,1}f_{1,k,2}.\]

The case where $f_{1,k,2} = 0$, $f_{2,t,1}=0$ and $f_{1,k,1}$,$f_{2,t,1}$ are non-zero follows similarly. Notice that these are all possible cases, so the claim holds. 
\end{proof}

\begin{theorem}
\label{leadingterms}
Let $S = \{P_{1},\ldots,P_{N}\}$ be a set of polynomials which is reducible in the order $(\alpha_{1},\ldots,\alpha_{n})$. Fix $l \in \{1,\ldots,n\}$. For all $i \in \{1,\ldots,N\}$,  Let $S^{l} = \{lc(P)| P \in S\}$. Then $S^{l}$ is reducible with the order $(\alpha_{1},\ldots, \alpha_{n})$. 
\end{theorem}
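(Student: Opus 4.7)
The plan is to closely follow the structure of Theorem \ref{reducibilityevalutation}, using a minimum-counterexample argument on $n$ and appealing to Lemma \ref{equationgrind} to handle the new subtlety introduced by the leading-coefficient operation. Suppose $(S, S^l)$ is a counterexample with $n$ as small as possible, and split on whether $l = 1$.

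The case $l = 1$ is handled directly: since every $P_i = g_i\alpha_1 + h_i$ is linear in $\alpha_1$, its leading coefficient with respect to $\alpha_1$ is either $g_i$ or $h_i$, giving $S^l \subseteq S^1 \cup S^2 \subseteq S^4$. Taking irreducible factors yields a subset of $S_{[\alpha_1]}$; and because $\alpha_1$ does not appear in $S^l$, the reduction of $S^l$ at $\alpha_1$ simply produces this set of irreducible factors. Lemma \ref{subsetlemma} then shows $S^l$ is reducible in the required order, contradicting the counterexample.

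For $l \neq 1$, first note that $lc(P_i)$ is itself linear in $\alpha_1$, since taking the coefficient of the highest power of $\alpha_l$ preserves degree in $\alpha_1$. By the minimality of $n$ applied to the reducible set $S_{[\alpha_1]}$ in the remaining variables, the set $S^{l'}_{[\alpha_1]} = \{lc(f) \mid f \in S_{[\alpha_1]}\}$ is reducible with order $(\alpha_2,\ldots,\alpha_n)$, and by Lemma \ref{irreduciblepolynomials} so is its set of irreducible factors. The aim is then to show $S^l_{[\alpha_1]}$ is contained in these irreducible factors, after which Lemma \ref{subsetlemma} completes the argument.

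To prove this containment, apply one step of the reduction algorithm to $S^l$ at $\alpha_1$ and examine each ingredient. A short case analysis on whether $\deg_{\alpha_l}(g_i)$ is greater than, equal to, or less than $\deg_{\alpha_l}(h_i)$ shows that $\partial lc(P_i)/\partial \alpha_1 \in \{0, lc(g_i)\}$ and $lc(P_i)|_{\alpha_1=0} \in \{0, lc(h_i)\}$. Since $lc$ is multiplicative across products of polynomials (the leading monomials never cancel), the irreducible factors of $lc(g_i)$ and $lc(h_i)$ all arise as irreducible factors of some $lc(q)$ for an irreducible factor $q$ of $g_i$ or $h_i$; each such $lc(q)$ lies in $S^{l'}_{[\alpha_1]}$. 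The cross terms in $S^{l,3}$ are exactly where Lemma \ref{equationgrind} is used: each is either $0$ or equals $lc(g_ih_j - g_jh_i)$, which by the same multiplicativity argument decomposes into irreducible factors of $lc$'s of polynomials in $S_{[\alpha_1]}$. The main obstacle, and the reason Lemma \ref{equationgrind} was proved in advance, is that $lc$ does not commute either with the derivative $\partial/\partial\alpha_1$ or with the substitution $\alpha_1 = 0$, so the cross term cannot simply be identified with $lc$ of the corresponding cross term from $S^4$; the two-case dichotomy of Lemma \ref{equationgrind} supplies exactly what is needed. A minor technical point is the possible appearance of $0$ terms, which contribute no irreducible factors and may be discarded.
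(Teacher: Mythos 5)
Your proof is correct and follows essentially the same strategy as the paper: a minimum-counterexample argument on $n$, a direct handling of the case $l=1$ using $S^l \subseteq S^1 \cup S^2$ together with Lemma \ref{subsetlemma}, and for $l \neq 1$ showing $S^l_{[\alpha_1]}$ sits inside the irreducible factors of $(S_{[\alpha_1]})^l$ by examining the three pieces $S^{l,1}, S^{l,2}, S^{l,3}$, invoking Lemma \ref{equationgrind} for the cross terms and multiplicativity of $lc$ to descend to irreducible factors. Your version is in fact slightly more explicit in a couple of places the paper glosses over: you note $lc(P_i) \in \{g_i, h_i\}$ in the $l=1$ case (rather than asserting $S^l = S^1$), and you spell out the degree trichotomy showing $\partial lc(P_i)/\partial\alpha_1 \in \{0, lc(\partial P_i/\partial\alpha_1)\}$ and $lc(P_i)|_{\alpha_1=0} \in \{0, lc(P_i|_{\alpha_1=0})\}$, which the paper leaves implicit in the chain of set inclusions.
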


\begin{proof}
Let $S$ and $S^{l}$ be a counterexample with $n$ minimized, as in, $S$ is reducible with order $(\alpha_{1},\ldots, \alpha_{n})$, but $S^{l}$ is not reducible with order $(\alpha_{1},\ldots, \alpha_{n})$

First we suppose that $l =1$. Then for all $P \in S$, we have $\deg(P, \alpha_{l}) \leq 1$. Then $S^{l} = S^{1}$ where $S^{1}$ is the set obtained in the reduction algorithm applied to $S$ and $\alpha_{l}$. Then $S^{l}_{[\alpha_{l}]} \subseteq S_{[\alpha_{l}]}$. Since $S_{[\alpha_{l}]}$ is reducible with order $(\alpha_{2},\ldots,\alpha_{n})$, by Lemma \ref{subsetlemma} we have that $S^{l}_{[\alpha_{l}]}$ is reducible with order $(\alpha_{2},\ldots,\alpha_{n})$, and thus $S^{l}$ is reducible with order $(\alpha_{1},\ldots,\alpha_{n})$.

Therefore we can assume that $l \neq 1$. Now consider the set $(S_{[\alpha_{1}]})^{l}$, which we define to be the set of leading coefficients of polynomials in $S_{[\alpha_{1}]}$ with respect to $\alpha_{l}$. Since $S$ is reducible with order $(\alpha_{1},\ldots,\alpha_{n})$, we have that $S_{[\alpha_{1}]}$ is reducible with order $(\alpha_{2},\ldots,\alpha_{n})$ and thus as we chose a minimal counterexample, $(S_{[\alpha_{1}]})^{l}$ is reducible with order $(\alpha_{2},\ldots, \alpha_{n})$.  Notice from the definitions we have:
\begin{align*}
(S_{[\alpha_{1}]})^{l} &= \{lc(f) | f \in S_{[\alpha_{1}]} \} 
\\&=
\{lc(f) | f \in \text{irreducible factors of } S^{4} \}.
\end{align*}

 For notational convenience, we will say $S^{jl}$ will be the set $S^{j}$ obtained by starting with $S^{l}$ for $j \in \{1,2,3,4\}$. Suppose we have a polynomial $f$, and $f = f_{1}f_{2}$ for some polynomials $f_{1}$ and $f_{2}$. Now, notice that $S^{l}_{[\alpha_{1}]}$ exists and,
\allowdisplaybreaks
\begin{align*}
S^{l}_{[\alpha_{1}]} 
&= \text{irreducible factors of } S^{4l}
 \\ &=
\text{irreducible factors of } (S^{1l} \cup S^{2l} \cup S^{3l}) 
\\ &=
\text{irreducible factors of } (\{ \frac{\partial f}{\partial \alpha_{1}} | f \in S^{l} \} \cup    \{f|_{\alpha_{1}=0} | f \in S^{l} \} \cup \\& \qquad \{\frac{\partial f_{1}}{\partial \alpha_{1}}f_{2}|_{\alpha_{1} = 0} - \frac{\partial f_{2}}{\partial \alpha_{1}}f_{1}|_{\alpha_{1}=0} | f_{1},f_{2} \in S^{l} \}) 
\\&=
\text{irreducible factors of } (\{ \frac{\partial lc(f)}{\partial \alpha_{1}}  | f \in S\} \cup    \{lc(f)|_{\alpha_{1}=0} | f \in S\} \cup \\& \qquad \{\frac{\partial lc(f_{1})}{\partial \alpha_{1}}lc(f_{2})|_{\alpha_{1} = 0} - \frac{\partial lc(f_{2})}{\partial \alpha_{1}}lc(f_{1})|_{\alpha_{1}=0} | f_{1},f_{2} \in S \}) 
\\& \subseteq \text{irreducible factors of } (\{ lc(\frac{\partial f}{\partial \alpha_{1}})  | f \in S\} \cup    \{lc(f|_{\alpha_{1}=0}) | f \in S\} \cup \\& \qquad \{lc(\frac{\partial f_{1}}{\partial \alpha_{1}}f_{2}|_{\alpha_{1} = 0} - \frac{\partial f_{2}}{\partial \alpha_{1}}f_{1}|_{\alpha_{1}=0}) | f_{1},f_{2} \in S \}) \\
&=  \text{irreducible factors of } \{lc(f) | f \in \text{irreducible factors of } S^{4} \}  \\
&= \text{irreducible factors of } (S_{[\alpha_{1}]})^{l}
\end{align*}

We note that the subset relationship between lines $4$ and $5$ above holds by appealing to Lemma \ref{equationgrind} and assuming that if $\frac{\partial lc(f_{1})}{\partial \alpha_{1}}lc(f_{2})|_{\alpha_{1} = 0} - \frac{\partial lc(f_{2})}{\partial \alpha_{1}}lc(f_{1})|_{\alpha_{1}=0} = 0$, then we remove the $0$ from the set. This does not affect reducibility since $0$ is a constant (see Proposition \ref{removingconstants}).

Then since $(S_{[\alpha_{1}]})^{l}$ is reducible with order $(\alpha_{2},\ldots,\alpha_{n})$, the set of irreducible factors of $(S_{[\alpha_{1}]})^{l}$ is reducible with order $(\alpha_{2},\ldots,\alpha_{n})$ by Lemma \ref{irreduciblepolynomials}. Then by Lemma \ref{subsetlemma},  we have that $S^{l}_{[\alpha_{1}]}$ is reducible with order $(\alpha_{2},\ldots,\alpha_{n})$, completing the claim. 
\end{proof}

With that, Theorem \ref{fullclaim} is proved. Now we shift focus to reducibility for the specific set $S = \{\Psi_{G}, \Phi_{G}\}$ for some graph $G$.

\section{Graph Minors and Reducibility}\label{sec minors}
Given a graph $G$, and an edge $e$, we denote the graph obtained by deleting $e$ as $G \setminus e$ and the graph obtained by contracting $e$ as $G / e$. A graph $G$ has a graph $H$ as a \textit{minor} if $H$ can be obtained from $G$ via a series of edge deletions, contractions and if necessary, deletion of isolated vertices. We remark that if $G$ is connected, then any $H$-minor of $G$ can be obtained by deleting and contracting edges only.

We now prove that reducibility for the Symanzik polynomials is graph minor closed for graphs with a fixed number of momenta. Notice that $\Psi_{G}$ is a homogeneous polynomial linear in all variables, and if there are no massive edges, then $\Phi_{G}$ is a homogeneous polynomial linear in all variables \cite{Feynmangraphpolynomials}. We also note the following well known observations (see, for example, \cite{Bognerpaper, Feynmangraphpolynomials}):

\begin{lemma} 
\label{derivativeandevaluationsetrelationship}
Let $G$ be a graph. Let $e \in E(G)$ such that $e$ is not massive and $e$ is not a loop. Then the following identities hold:

\[\frac{\partial}{\partial \alpha_{e}}\Phi_{G} = \Phi_{G\setminus e}, \ \Phi_{G}|_{\alpha_{e} = 0} = \Phi_{G / e},\]

\[\frac{\partial}{\partial\alpha_{e}}\Psi_{G} = \Psi_{G\setminus e}, \  \Psi_{G}|_{\alpha_{e} = 0} = \Psi_{G / e}.\]

\end{lemma}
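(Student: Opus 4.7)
The plan is to prove each of the four identities by the same combinatorial splitting: write the polynomial as its $\alpha_e$-linear expansion $f_e \alpha_e + g_e$ so that differentiation in $\alpha_e$ extracts $f_e$ and evaluation at $\alpha_e = 0$ extracts $g_e$, then identify $f_e$ with the $G\setminus e$ polynomial and $g_e$ with the $G/e$ polynomial via the standard bijections between spanning trees (resp.\ spanning $2$-forests) of $G$ not containing $e$ and those of $G\setminus e$, and between spanning trees (resp.\ spanning $2$-forests) of $G$ containing $e$ and those of $G/e$. The assumption that $e$ is not a loop is used here to ensure that these bijections behave as expected: contraction does not merge a vertex with itself, and $e$ is allowed to belong to a spanning tree or $2$-forest in the first place.

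First I would handle $\Psi_G$. Partitioning the spanning trees of $G$ according to whether they contain $e$ gives the contraction-deletion identity $\Psi_G = \alpha_e \Psi_{G\setminus e} + \Psi_{G/e}$, from which both $\Psi$ identities fall out immediately. The only edge case is when $e$ is a bridge, in which $G\setminus e$ is disconnected, no spanning tree avoids $e$, and both sides of the derivative identity are $0$ under the disconnected-graph convention from the introduction.

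The $\Phi$ case proceeds analogously on the $2$-forest sum, but with two extra things to check. Because $T_1$ and $T_2$ are vertex-disjoint and $e$ is not a loop, $e$ lies in at most one of $T_1, T_2$, so the three-way split (neither, $T_1$, or $T_2$) yields the required $\alpha_e$-linear expansion of the $2$-forest sum. The main delicate point, and the only real obstacle, is that the weight $(\rho^{T_1})^2$ must be preserved under the contraction bijection; this is true because the two endpoints of $e$ lie in the same component $T_i$, so merging them into a single vertex with combined external momentum leaves $\sum_{v \in V(T_1)} \rho_v$ unchanged. Finally, the mass correction $\Psi_G \sum_i \alpha_i m_i^2$ matches up on both sides: since $e$ is not massive, $m_e = 0$, so $\partial_{\alpha_e}\bigl(\Psi_G \sum_i \alpha_i m_i^2\bigr) = \Psi_{G\setminus e}\sum_{i \neq e}\alpha_i m_i^2$ and evaluation at $\alpha_e = 0$ gives $\Psi_{G/e}\sum_{i \neq e}\alpha_i m_i^2$, by the already-proved $\Psi$ identities; these are exactly the mass corrections of $\Phi_{G\setminus e}$ and $\Phi_{G/e}$ respectively, completing the argument.
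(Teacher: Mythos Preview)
The paper does not prove this lemma; it is introduced as one of the ``well known observations'' with citations to \cite{Bognerpaper, Feynmangraphpolynomials} and no argument is given. Your proof is exactly the standard contraction--deletion argument one would find in those references: split the spanning trees (resp.\ spanning $2$-forests) of $G$ according to whether they contain $e$, and identify the two pieces with the corresponding polynomials of $G\setminus e$ and $G/e$. Your treatment of the two extra ingredients in the $\Phi$ case---that the momentum weight $(\rho^{T_1})^2$ is preserved under contraction because both endpoints of $e$ lie in the same tree, and that the mass term reduces correctly via $m_e=0$ and the already-established $\Psi$ identities---is correct and is precisely what is needed beyond the $\Psi$ argument. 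There is nothing to compare against in the paper itself; you have supplied what the paper leaves to the literature.
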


\begin{lemma}
\label{loopedgesSymanzikPolys}
Let $G$ be a graph and consider any edge $e$ which is a loop. Then $\Psi_{G} = \alpha_{e}\Psi_{G \setminus e}$, and $\Phi_{G} = \Phi_{G \setminus e}\alpha_{e}$.
\end{lemma}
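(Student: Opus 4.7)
The plan is to reduce both identities to a bijective correspondence between spanning structures of $G$ and spanning structures of $G \setminus e$, combined with the elementary observation that a loop contributes an extra free $\alpha_e$ factor in every term.

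First I would record the key combinatorial fact: a loop $e$ cannot lie in any spanning tree of $G$ (it would form a cycle by itself) and therefore cannot lie in any tree of a spanning 2-forest of $G$. Moreover, deleting $e$ does not change the vertex set of $G$ and does not affect the connectivity of any pair of vertices, so the map $T \mapsto T$ gives a bijection between spanning trees of $G$ and spanning trees of $G \setminus e$, and analogously $(T_1, T_2) \mapsto (T_1, T_2)$ is a bijection on spanning 2-forests that preserves the vertex partition (and hence the momentum factor $(\rho^{T_1})^2$).

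For $\Psi_G$, I would then plug this into the definition: since $e \notin E(T)$ for every spanning tree $T$ of $G$, the monomial $\prod_{e' \notin E(T)} \alpha_{e'}$ always contains the factor $\alpha_e$, and pulling it out of every summand gives $\alpha_e \Psi_{G \setminus e}$ directly. For $\Phi_G$, I would split the polynomial into its two summands. The kinematic summand $\sum_{(T_1,T_2)} (\rho^{T_1})^2 \prod_{e' \notin E(T_1 \cup T_2)} \alpha_{e'}$ handles the same way, using the 2-forest bijection and the fact that $(\rho^{T_1})^2$ depends only on $V(T_1)$, not on $e$. For the mass term $\Psi_G \sum_i \alpha_i m_i^2$, I would use the identity for $\Psi_G$ already established together with the convention (consistent with the hypothesis ``$e$ is not massive'' in Lemma~\ref{derivativeandevaluationsetrelationship}) that a loop carries no mass, so that $\sum_{i \in E(G)} \alpha_i m_i^2 = \sum_{i \in E(G \setminus e)} \alpha_i m_i^2$ and the whole term factors as $\alpha_e$ times the analogous mass term of $\Phi_{G \setminus e}$. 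Summing the two factored pieces yields $\Phi_G = \alpha_e \Phi_{G \setminus e}$.

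The main obstacle, if any, is the mass-term bookkeeping: if one permits $m_e \neq 0$ on a loop, the clean factorization breaks down by an extra $\alpha_e^2 m_e^2 \Psi_{G \setminus e}$ contribution, so the implicit non-massive-loop convention must be recognized and invoked. Once that is granted, the proof is a direct unwinding of the definitions along the spanning-tree and spanning-2-forest bijections; no new ideas beyond those already used in Lemma~\ref{derivativeandevaluationsetrelationship} are required.
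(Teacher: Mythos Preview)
The paper does not actually prove this lemma; it is stated as a ``well known observation'' with a citation to \cite{Bognerpaper, Feynmangraphpolynomials} and no argument given. Your proof is the standard one and is correct, including your careful flagging of the mass-term issue: as stated, the identity $\Phi_G = \alpha_e \Phi_{G\setminus e}$ requires $m_e = 0$, and indeed the only place the paper invokes this lemma (Case~1 in the proof of Theorem~\ref{MinorClosed}) first reduces to the massless case.
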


Before proving the graph minor closed result, we make the convention that for a given graph $G$, we consider reducibility in the polynomial ring $\mathbb{Q}[\alpha_{1},\ldots,\alpha_{r}]$ where each $\alpha_{i}$ corresponds to a Schwinger parameter of an edge of $G$. However using this convention means that we cannot appeal to Theorem \ref{fullclaim} as it is stated, as deleting and contracting edges would reduce the number of variables we are considering, whereas in Theorem \ref{fullclaim}, the variable list never changes. 
However, the corresponding result with $\alpha_l$ removed from the variable list after taking constant or leading coefficients is an immediate corollary because, as observed previously, a reduction step where the variable does not appear makes no change to the polynomial while potentially adding edges to the compatibility graph which can only make it harder to be reducible.
%Despite that, we note modifying Theorem \ref{fullclaim} to remove the ``$\alpha_{l}"$ variable would follow by essentially the same proof as Theorem \ref{fullclaim}. Additionally, as mentioned before, if a variable does not appear in any polynomial in $S$, then we do not change the reducibility of $S$ by removing the variable. 
Now we prove the graph minor closed result.  

\begin{theorem}
\label{MinorClosed}
Let $r$ be a fixed positive integer. Let $\mathcal{G}$ be the set of connected graphs which have $r$ external momenta, and are reducible with respect to $\{\Phi,\Psi\}$. Then all connected minors of $G$ are in $\mathcal{G}$. 
\end{theorem}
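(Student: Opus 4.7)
The plan is to reduce the theorem to the case of a single edge operation on $G$ and then translate each such operation into a polynomial operation covered by Theorem \ref{fullclaim}.

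First, I would establish the following graph-theoretic reduction: any connected minor $H$ of a connected graph $G$ can be realized as a sequence of edge contractions and deletions $G = G_0, G_1, \ldots, G_k = H$ in which every intermediate $G_i$ is connected. For this, take a minor model of $H$ in $G$ whose branch sets partition $V(G)$ (any model extends to such a partition, using the connectivity of $G$); contract the internal edges of each branch set in turn, which preserves connectivity. The resulting graph $G'$ has vertex set $V(H)$ and contains $H$ as a connected spanning subgraph, so the edges of $E(G') \setminus E(H)$ can be removed one at a time with $H$ always remaining as a connected spanning subgraph of the current graph. By induction on $k$, the theorem reduces to showing that if $G \in \mathcal{G}$ and $e \in E(G)$, then $G/e \in \mathcal{G}$, and whenever $G \setminus e$ is connected, $G \setminus e \in \mathcal{G}$ as well.

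Next, for a non-loop edge $e$, Lemma \ref{derivativeandevaluationsetrelationship} gives $\Phi_{G/e} = \Phi_G|_{\alpha_e = 0}$ and $\Psi_{G/e} = \Psi_G|_{\alpha_e = 0}$, so reducibility of $\{\Phi_{G/e}, \Psi_{G/e}\}$ follows from Theorem \ref{reducibilityevalutation} applied with $\alpha_l = \alpha_e$ (using the observation in the excerpt that a variable not appearing in the polynomial set can be dropped from the variable list without affecting reducibility). The same lemma identifies $\Phi_{G \setminus e}$ and $\Psi_{G \setminus e}$ with the partial derivatives $\partial \Phi_G / \partial \alpha_e$ and $\partial \Psi_G / \partial \alpha_e$, which equal the leading coefficients of $\Phi_G, \Psi_G$ in $\alpha_e$ by linearity of the Symanzik polynomials in each variable, so Theorem \ref{leadingterms} yields reducibility of $\{\Phi_{G \setminus e}, \Psi_{G \setminus e}\}$. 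For a loop edge $e$, where deletion and contraction coincide, Lemma \ref{loopedgesSymanzikPolys} writes $\Phi_G = \alpha_e \Phi_{G \setminus e}$ and $\Psi_G = \alpha_e \Psi_{G \setminus e}$, so the leading coefficients with respect to $\alpha_e$ are exactly $\Phi_{G \setminus e}$ and $\Psi_{G \setminus e}$, and Theorem \ref{leadingterms} again applies. Edge operations only merge or leave vertices unchanged, so the resulting graph still has at most $r$ external momenta and lies in $\mathcal{G}$.

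The main obstacle is the graph-theoretic step: one must avoid passing through disconnected intermediate graphs, since the convention $\Phi = \Psi = 0$ for disconnected graphs would erase the information the inductive argument needs. Beyond this, a small piece of bookkeeping is required to track which variable leaves the variable list at each stage so that the hypotheses of Theorem \ref{fullclaim} line up across the induction; once this is handled, the polynomial core of the argument is a clean one-step application of that theorem at each edge operation.
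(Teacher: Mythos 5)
Your proof is correct and follows essentially the same polynomial-level strategy as the paper: reduce to a single edge operation, identify deletion with taking leading coefficients (via Lemma~\ref{derivativeandevaluationsetrelationship} and linearity), identify contraction with the evaluation at zero, and invoke Theorems~\ref{leadingterms} and~\ref{reducibilityevalutation}. Two small divergences are worth noting. First, in the loop case the paper uses Lemma~\ref{irreduciblepolynomials} to split $\alpha_e \Phi_{G\setminus e}$ into irreducible factors and then drops the monomial $\alpha_e$, whereas you observe that $\Phi_{G\setminus e}$ is already the leading coefficient of $\alpha_e\Phi_{G\setminus e}$ in $\alpha_e$ and apply Theorem~\ref{leadingterms} directly; both are clean and equivalent in difficulty. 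Second, and more substantively, you explicitly establish that any connected minor of a connected graph is reachable through a sequence of edge operations with all intermediate graphs connected (contract branch sets of a partition model first, then delete surplus edges), which is exactly the step needed so that the induction never touches a disconnected graph where $\Phi = \Psi = 0$ would collapse the argument. The paper's proof invokes induction but passes over this point silently — strictly speaking its statement ``both $G\setminus e$ and $G/e$ are in $\mathcal{G}$'' can fail when $G\setminus e$ is disconnected — so your reduction is not a detour but a genuine tightening of the argument.
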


\begin{proof}
Let $G \in \mathcal{G}$. Pick any edge $e \in E(G)$ (if there are no edges then the result follows trivially). If $e$ is massive, then $G$ is reducible for any value of $m_{e}$ and so we can suppose $m_{e}=0$. Therefore without loss of generality we assume $e$ is massless.  We consider two cases.

 \textbf{Case 1:} Suppose $e$ is a loop. Graph theoretically for loops, deletion and contraction are the same. Then by Lemma \ref{loopedgesSymanzikPolys}, we have $\{\Phi_{G}, \Psi_{G}\} = \{\Phi_{G \setminus e}\alpha_{e}, \Psi_{G \setminus e}\alpha_{e}\}$. By Lemma \ref{irreduciblepolynomials}, $\{\Phi_{G \setminus e}\alpha_{e}, \Psi_{G \setminus e}\alpha_{e}\}$ is reducible if and only if $\{\Phi_{G \setminus e}, \Psi_{G \setminus e}, \alpha_{e}\}$ is reducible. Since $\alpha_{e}$ is a monomial, we get that $\{\Phi_{G}, \Psi_{G}\}$ is reducible if and only if $\{\Phi_{G \setminus e}, \Psi_{G \setminus e}\}$ is reducible. Alternately if we take the more physical convention that contracting a loop sends the whole graph to $0$ (corresponding to the fact that $\Psi_G|_{\alpha_{e}=0} = 0$ and $\Phi_G|_{\alpha_{e}=0}=0$ for $e$ a loop) then the result still holds as $\{0\}$ is trivially reducible.  Therefore $G \setminus e \in \mathcal{G}$, and by induction the result follows. 

\textbf{Case 2:} Suppose that $e$ is not a loop edge and consider $G \setminus e$ and $G / e$. By Lemma \ref{derivativeandevaluationsetrelationship} the Symanzik polynomials for $G \setminus e$ are $\{ \frac{\partial}{\partial\alpha_{e}}\Phi_{G}, \frac{\partial}{\partial\alpha_{e}}\Psi_{G} \}$. Similarly, by Lemma \ref{derivativeandevaluationsetrelationship} the Symanzik polynomials for $G /e$ are $\{\Psi_{G|_{\alpha_{e} = 0}}, \Phi_{G|_{\alpha_{e}=0}}\}$. Notice that since we have no massive edges, both of the Symanzik polynomials are linear in $\alpha_{l}$, thus $\frac{\partial}{\partial \alpha_{e}} \Phi_{G} = lc(\Phi_{G})$ and $\frac{\partial}{\partial \alpha_{e}} \Phi_{G} = lc(\Phi_{G})$. Thus both sets are reducible by appealing to Theorem \ref{leadingterms} and Theorem \ref{reducibilityevalutation}. Therefore both $G \setminus e$ and $G /e$ are in $\mathcal{G}$, and by induction the result follows. 
\end{proof}

We note the exact same proof works for reducibility with respect to just $\Phi$ or $\Psi$.

\subsection{Towards a Forbidden Minor Characterization}

\begin{figure}
\begin{center}
\includegraphics[scale=0.75]{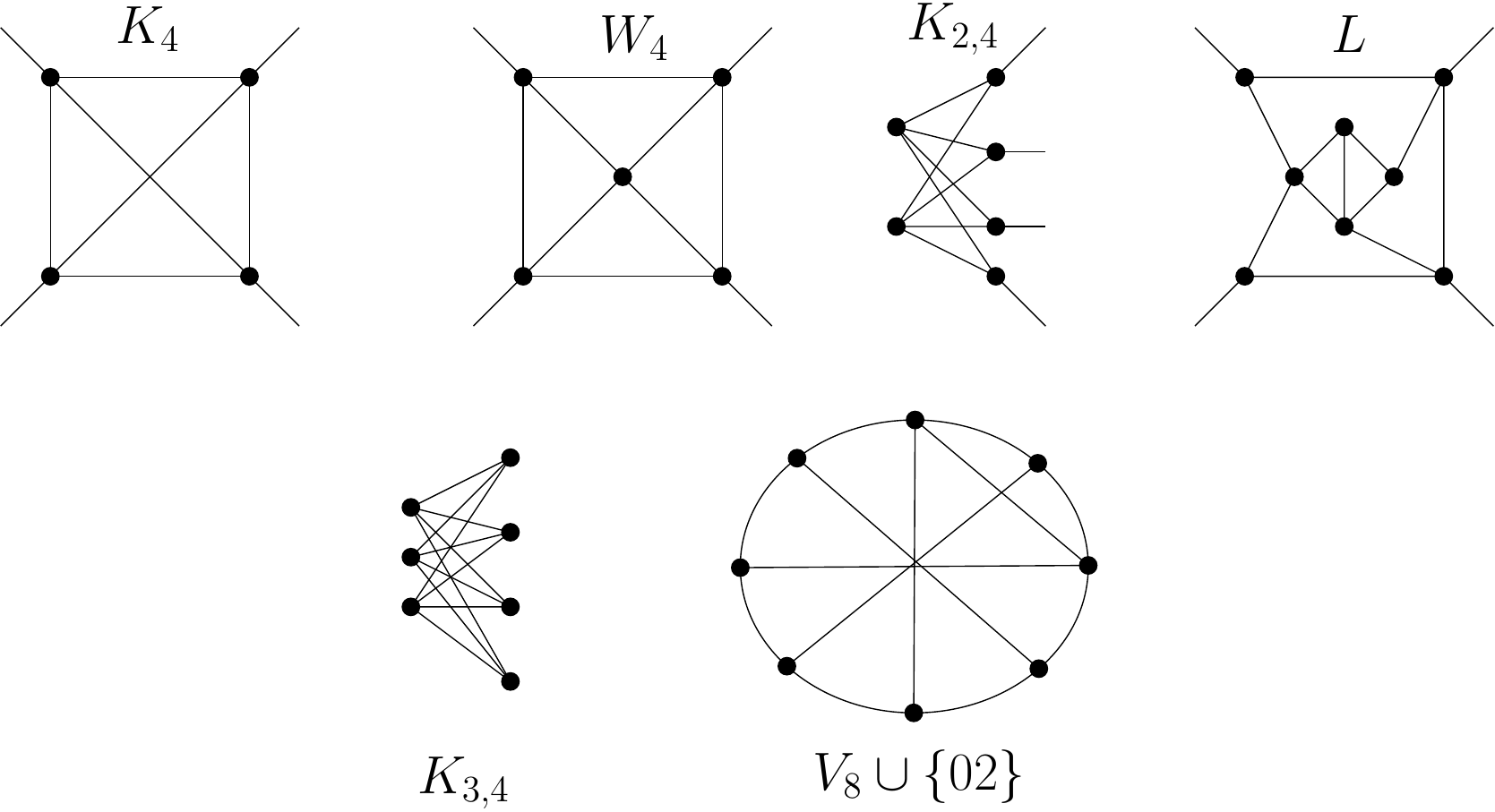}
\caption{Forbidden Minors for reducibility. Edges with no endpoint denote vertices with external momenta.}
\label{forbiddenminors}
\end{center}
\end{figure}

As mentioned in the introduction, Robertson and Seymour proved that if a property is graph minor closed, then that property is characterized by a finite set of forbidden minors (where a graph is a forbidden minor if it does not have the property, but all proper minors of it have the property) \cite{WQOtheorem}. It is worth noting that even restricted to connected graphs this theorem holds and that appropriate analogues hold for graphs with external momenta (rooted minor results) as will be discussed below.  Then by Theorem \ref{MinorClosed}, reducibility with a fixed number of external momenta is characterized by a finite set of forbidden minors. Therefore, it is of interest to try and determine the full set of forbidden minors.  We do not claim to have anywhere near a complete characterization, but we list some of the graphs which are known to be non-reducible here. Checking whether a graph is reducible was done using Panzer's HyperInt program \cite{panzerphdthesis, HyperintArticle}. We note that Bogner also has software available for checking reducibility \cite{MPLarticle}. 

Quite a bit of work has been done on reducibility with respect to the first Symanzik polynomial. In \cite{FrancisBig}, the graph $K_{3,4}$ was shown to be non-reducible with respect to $\Psi$ and that, in particular, for any permutation $\sigma$, the set $S_{[\sigma(1),\ldots,\sigma(7)]}$ (if it exists) is quadratic in $\sigma(8)$. Furthermore, by direct computation, one can show that all proper minors of $K_{3,4}$ are reducible, showing that $K_{3,4}$ is a forbidden minor for reducibility with respect to the first Symanzik polynomial \cite{FrancisBig}. 

Also in \cite{FrancisBig}, Brown notes that the graph $V_{8} \cup \{02\}$ (see Figure \ref{forbiddenminors}, also known as $Q48$ in \cite{olivercensus}, $V_{8}$ is the well known Wagner's graph) is not reducible with respect to $\Psi$. Again by direct computation, it is possible to show that $V_{8} \cup \{02\}$ is a forbidden minor with respect to $\Psi$ (\cite{FrancisBig}). 

Now we record forbidden minors for reducibility with respect to both Symanzik polynomials, where the underlying graph has no massive edges, and four on-shell external momenta (so $\rho^{2} =0$ for all external momenta $\rho$). It was shown in \cite{Martin} and noted in \cite{Bognerpaper} that the complete graph on four vertices, $K_{4}$, is a forbidden minor, where each vertex has an external edge. Also, the wheel with four spokes, $W_{4}$ is a forbidden minor when all of the vertices with external edges lie on the four vertices with degree $3$. We note that $W_{4}$ is the forbidden minor obtained when taking the non-reducible graphs in \cite{Martin} and finding the minimal non-reducible graph. One can show that $K_{2,4}$ where all the external edges are on the large side of the bipartition is a forbidden minor. Lastly, there is a graph which we call $L$ which is a new forbidden minor (see Figure \ref{forbiddenminors}). To the authors knowledge, these are all known forbidden minors, though we expect there to be more which are not yet known. 

On finding more forbidden minors, a census of periods of Feynman integrals has been constructed in \cite{olivercensus}. Since reducibility is tied to Brown's integration algorithm, and we know that when Brown's algorithm succeeds that the Feynman integral is a multiple zeta value, all graphs in \cite{olivercensus} whose Feynman integral is not a multiple zeta value are not reducible. There are some graphs which are known to not have periods which are multiple zeta values, but they are almost certainly not forbidden minors. However, it is quite difficult computationally to try and find the forbidden minor from these graphs, and it would be beneficial to find a easier way to compute reducibility. Additionally, we have the following conjecture.

\begin{Conjecture}
\label{k6minor}
The complete graph on six vertices, $K_{6}$, is a forbidden minor for reducibility with respect to the first Symanzik polynomial. 
\end{Conjecture}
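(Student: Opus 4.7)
The plan is to establish two independent claims: (A) $K_6$ is not reducible with respect to $\Psi$, and (B) every proper minor of $K_6$ is reducible with respect to $\Psi$.

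For part (A), I would exploit the fact that $\mathrm{Aut}(K_6)\cong S_6$ acts transitively on the edges of $K_6$ (and also on pairs, triples, etc.\ of edges), so after the first several reduction steps the state of the algorithm is determined up to isomorphism by the multiset of edge-operations performed together with a small amount of incidence data. As observed by Brown in \cite{FrancisBig}, the polynomials appearing at each stage when the reduction is applied to $\Psi$ alone are irreducible factors of Dodgson polynomials $\Psi^{I,J}_{K_6}$, and these can be computed as minors of a reduced Laplacian of $K_6$ via the matrix-tree theorem. The goal is to exhibit, for every ordering of the $15$ variables, a step at which an irreducible Dodgson polynomial arises that has degree at least $2$ in the next variable. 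Symmetry prunes this search to finitely many orbits, which can then be enumerated and checked with Panzer's HyperInt program \cite{panzerphdthesis}.

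For part (B), Theorem~\ref{MinorClosed} combined with edge-transitivity of $\mathrm{Aut}(K_6)$ reduces the claim to verifying reducibility of the two graphs $K_6\setminus e$ and $K_6/e$ for a single edge $e$. The graph $K_6/e$ is $K_5$ with one vertex joined to each of the other four by a doubled edge; using Lemma~\ref{irreduciblepolynomials} to pass freely between a polynomial and its irreducible factors, the reducibility of $K_6/e$ should follow from that of $K_5$, which can be verified directly. The graph $K_6\setminus e$ has $14$ edges, and its reducibility would need to be certified by exhibiting a variable ordering under which the reduction algorithm terminates; a natural strategy is to begin with edges whose contraction or deletion produces a copy of $K_5$, and then adapt a successful ordering for $K_5$ to the extra edge.

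The principal obstacle is the combinatorial explosion of the reduction algorithm. For (A), even after using symmetry, one must certify non-reducibility for \emph{every} admissible permutation, and the intermediate Dodgson polynomials grow rapidly in both size and multiplicity, so a computer-assisted case analysis appears essential. For (B), the graph $K_6\setminus e$ is the most delicate case: if no ordering of its $14$ variables reduces successfully, then $K_6\setminus e$ (or one of its proper minors) would itself be a forbidden minor, and the conjecture would need revision. Hence part (B) is where the conjecture is most likely to fail, and verifying it appears to require either a structural decomposition of $K_6\setminus e$ exploiting its high symmetry, or substantial computer algebra.
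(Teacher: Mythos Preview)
The statement you are attempting to prove is labelled a \emph{Conjecture} in the paper, and the paper does not prove it. Immediately following the conjecture the authors write: ``We remark that it is not known if $K_{6}$ is even not reducible.'' There is therefore no proof in the paper to compare your proposal against; the statement is explicitly open.

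Your proposal is not a proof but a research plan, and you appear to recognise this: you defer both the non-reducibility of $K_6$ and the reducibility of $K_6\setminus e$ to computer verification that you have not carried out, and you correctly flag that part (B) may simply be false. This is an honest assessment of the situation, and the outline is sensible---symmetry reduction for part (A), and reduction to $K_6\setminus e$ and $K_6/e$ for part (B) via Theorem~\ref{MinorClosed} and edge-transitivity---but it does not constitute a proof. In particular, the paper's remark that even the non-reducibility of $K_6$ is unknown suggests that part (A) is not merely a routine HyperInt computation; if it were, presumably the authors or Panzer would already have settled it. The combinatorial explosion you mention is real, and no one has yet pushed through it for this graph.
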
 

We remark that it is not known if $K_{6}$ is even not reducible. It turns out that the resolution of this conjecture will largely not impact the class of reducible graphs with respect to the first Symanzik polynomial. This is due to the following observation:

\begin{observation}
\label{K34andK6}
Let $G$ be a $3$-connected graph with a $K_{6}$-minor. Then either $G$ is isomorphic to $K_{6}$, or $G$ has a $K_{3,4}$-minor. 
\end{observation}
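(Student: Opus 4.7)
The plan is to prove this by induction on $|V(G)|$. The base case is $|V(G)|=6$: a $K_6$-minor on six vertices forces the six branch sets to be singletons pairwise joined by an edge, so $G\cong K_6$. Hence I may assume $|V(G)|\ge 7$.

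For the inductive step I would invoke Tutte's wheel theorem (every $3$-connected graph on at least five vertices has an edge $e$ such that $G/e$ is again $3$-connected) to find an edge $e$ with $H:=G/e$ 3-connected. I need to arrange that $H$ still has a $K_6$-minor, which holds whenever $e$ lies inside one of the branch sets of some $K_6$-minor of $G$. For this, I would first fix a $K_6$-minor $V_1,\ldots,V_6$ with $\sum_i|V_i|$ maximal; connectedness of $G$ forces the branch sets to partition $V(G)$ (any vertex not covered can be annexed into an adjacent branch), and since $|V(G)|\ge 7$ some $V_i$ has $|V_i|\ge 2$ and hence contains internal edges. One then argues that among Tutte's contractible edges at least one can be arranged to lie internal to such a $V_i$. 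If $H\neq K_6$, the inductive hypothesis gives a $K_{3,4}$-minor of $H$, which is then a minor of $G$.

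The crucial remaining case is $H=K_6$: then $G$ has exactly seven vertices and is obtained from $K_6$ by splitting some vertex $v$ into two adjacent vertices $v_1,v_2$ joined by $e$. Writing $\{u_1,\ldots,u_5\}$ for the remaining five vertices of $K_6$, let $A=N_G(v_1)\cap\{u_1,\ldots,u_5\}$ and $B=N_G(v_2)\cap\{u_1,\ldots,u_5\}$; by construction $A\cup B=\{u_1,\ldots,u_5\}$, and $3$-connectedness of $G$ forces $\deg_G(v_i)\ge 3$, hence $|A|,|B|\ge 2$. Combined with $|A\cup B|=5$ this forces $\max(|A|,|B|)\ge 3$; WLOG $|A|\ge 3$. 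Let $A_0\subseteq A$ be any $3$-subset containing the (at most two) elements of $\{u_1,\ldots,u_5\}\setminus B$. Setting $X=\{v_1\}\cup(\{u_1,\ldots,u_5\}\setminus A_0)$ and $Y=\{v_2\}\cup A_0$, every pair $xy$ with $x\in X$ and $y\in Y$ is joined by an edge of $G$: the split edge $v_1v_2$ handles $(v_1,v_2)$, the $v_1$-to-$A$ adjacencies handle $(v_1,A_0)$, the $v_2$-to-$B$ adjacencies handle $(\{u_1,\ldots,u_5\}\setminus A_0,\,v_2)$ (which works because $\{u_1,\ldots,u_5\}\setminus A_0\subseteq B$ by choice of $A_0$), and the $K_5$ induced on $\{u_1,\ldots,u_5\}$ handles the remaining cross-edges. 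This exhibits $K_{3,4}$ as a subgraph of $G$.

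The main obstacle is the step of guaranteeing that Tutte's contractible edge can be chosen internal to a branch set of a $K_6$-minor, so that $K_6$-minor is preserved after contraction; this requires combining Tutte's theorem with a structural argument on choice of branch sets (or with quantitative results on the number of contractible edges in a $3$-connected graph, to ensure at least one is internal). A cleaner fallback avoids induction altogether: take a $K_6$-minor $V_1,\ldots,V_6$ partitioning $V(G)$, split some $V_i$ with $|V_i|\ge 2$ via a leaf of its spanning tree to produce seven connected subgraphs, and use the $3$-connectedness of $G$ (notably minimum degree $\ge 3$) to assemble them directly into a $K_{3,4}$-minor.
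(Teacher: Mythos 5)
Your handling of the seven-vertex case is correct and essentially the same computation one is reduced to by the paper's route: you fix $v_1,v_2$ with $G/v_1v_2=K_6$, set $A=N(v_1)\cap\{u_1,\ldots,u_5\}$, $B=N(v_2)\cap\{u_1,\ldots,u_5\}$, use $A\cup B=\{u_1,\ldots,u_5\}$, $|A|,|B|\ge 2$ (from minimum degree $3$), hence $\max(|A|,|B|)\ge 3$, and then exhibit $K_{3,4}$ as a subgraph on $X=\{v_1\}\cup(\{u_i\}\setminus A_0)$, $Y=\{v_2\}\cup A_0$. That part is sound and well presented.

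The gap is exactly where you flag it, and it is a real one, not a detail. Tutte's wheel theorem gives you \emph{some} contractible edge $e$ (i.e.\ with $G/e$ $3$-connected), but it gives you no control over where $e$ sits relative to the branch sets of your $K_6$-minor, and an edge internal to a branch set need not be contractible. Nothing in the maximality choice of the branch sets, nor in the known lower bounds on the number of contractible edges, forces a contractible edge into the interior of some $V_i$; so as written the induction does not go through. The ``cleaner fallback'' has the same kind of problem: after peeling a leaf $u$ off a spanning tree of $G[V_6]$, minimum degree $\ge 3$ alone does not guarantee that $u$ has any neighbour outside $V_6$ (all three neighbours of $u$ could lie in $V_6$), nor that the five adjacencies from $V_6$ to $V_1,\ldots,V_5$ split between $u$ and $V_6\setminus\{u\}$ in a way that assembles into $K_{3,4}$; making that work requires an additional argument (re-choosing $u$, re-partitioning, or a connectivity counting step) which the sketch does not supply.

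What you are missing is precisely Seymour's splitter theorem, which is the tool the paper cites. Because $K_6$ is $3$-connected and not a wheel, the splitter theorem produces a chain $G=G_0,G_1,\ldots,G_m=K_6$ of $3$-connected graphs in which each $G_{i+1}$ arises from $G_i$ by a single edge deletion or contraction and each $G_i$ retains a $K_6$-minor; since a simple graph on six vertices cannot properly contain $K_6$, the step from $G_{m-1}$ to $K_6$ must be a contraction, so $G_{m-1}$ is exactly a seven-vertex vertex-split of $K_6$. That lands you on your correctly handled base case and makes the whole argument close. In short: keep your seven-vertex computation, but replace the Tutte/wheel-theorem induction with an appeal to the splitter theorem, which is exactly the ``preserve $3$-connectivity and the $K_6$-minor simultaneously'' statement you were reaching for.
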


This follows easily from Seymour's $3$-splitter Theorem \cite{splittertheorem} (see \cite{Ben} for a proof of the observation). We note, despite not having a full list of forbidden minors, one can utilize the known forbidden minors to gain insight on the structure of reducible graphs.  Here we survey some of the known results, in particular results dealing with the four external on-shell momenta case.

First, we have to extend the notion of graph minors to deal with the external momenta. For the purposes of this discussion, the \textit{roots} of a graph $G$ will be the the vertices of $G$ with external momenta. Let $X$ be the set of roots of a graph. Then, given graphs $G$ and $H$, and a injective map $f:X \to V(H)$, we say $G$ admits a \textit{rooted $H$-minor with respect to $f$} if there exists a set $\{G_{x} \ | \ x \in V(H)\}$ where $G_{x}$ is a set of vertices, we have $G_{x} \cap G_{y} = \emptyset$ when $x \neq y$, for each $uv \in E(H)$, there is a $x \in G_{u}$ and $y \in G_{v}$ such that $xy \in E(G)$, and lastly if $x \in X$, then $x \in G_{f(x)}$. We note that if one drops the final condition, then one obtains an alternative definition of graph minors. In general, we will want to consider more than one injective map, so if we have injective maps $f_{1},\ldots,f_{n}: X \to V(H)$, that a graph $G$ has a \textit{rooted $H$-minor} if $G$ has a rooted $H$-minor with respect to $f_{i}$ for any $i \in \{1,\ldots,n\}$. 

For our purposes, for each of $K_{4}$, $W_{4}$, $K_{2,4}$ and $L$, we consider all injective maps from the roots of $G$ to the external momenta of those graphs as shown in Figure \ref{forbiddenminors}. We note that the notion of rooted minors is the correct notion of minors when dealing with reducibility with respect to both Symanzik polynomials. If a graph has a rooted $K_{4}$, $W_{4}$, $K_{2,4}$ or $L$-minor, then the graph is not reducible with respect to $\{\Phi,\Psi\}$. Then we can ask the question; what is the structure of a graph not containing rooted $K_{4}$, $W_{4}$, $K_{2,4}$, or $L$-minors?  There has been some work on varying aspects of this question \cite{Ben, Graphminorspaper, root, linodemasithesis}. It turns out for $3$-connected graphs, one of the those minors always exists. In fact, we have:

\begin{theorem}[\cite{Ben, Graphminorspaper}]
Let $G$ be a $3$-connected graph with four roots. Then either $G$ has a rooted $K_{4}$-minor or a rooted $W_{4}$-minor. 
\end{theorem}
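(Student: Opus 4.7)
The plan is to prove this by induction on $|V(G)|$, using strong connectivity together with well-known structural results about contractible edges in $3$-connected graphs. The base case takes $|V(G)| = 4$, where $3$-connectedness forces $G = K_4$, and all four vertices are roots, so $G$ itself is the desired rooted $K_4$-minor. For $|V(G)| = 5$, the $3$-connected graphs are a very short list (including $K_5$, $K_5 - e$, and $W_4$ itself), and in each the claim can be checked directly for every assignment of four roots; in particular $W_4$ realises itself under any injection sending the roots to its four degree-$3$ rim vertices.

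For the inductive step with $|V(G)| \geq 6$, I would seek an edge $e = uv$ with two properties: (i) $G/e$ is still $3$-connected (so $e$ is ``contractible'' in the usual sense), and (ii) at most one of $u,v$ is a root. Property (i) is the standard notion for which Thomassen's theorem guarantees that every $3$-connected graph on at least five vertices has at least one contractible edge, and in fact many of them. Property (ii) ensures that after contraction the image graph still has four distinct roots, so we may assign them to $G/e$ in the natural way. Applying the induction hypothesis to $G/e$ yields a rooted $K_4$- or rooted $W_4$-minor in $G/e$; since both minors lift under uncontracting $e$, this produces the required rooted minor in $G$.

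The main obstacle is handling the situation in which \emph{every} contractible edge of $G$ has both endpoints among the four roots. Because there are only $\binom{4}{2} = 6$ such potential edges, classical refinements of Thomassen's theorem (for example results of Ando--Enomoto--Saito on the distribution of contractible edges and their incidences) force $G$ into one of a small explicit family of exceptional configurations, built on the four roots plus a few non-root vertices with very constrained neighbourhoods. This exceptional case is the technical heart of the argument.

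In each exceptional configuration I would finish by hand, leveraging $3$-connectedness via Menger's theorem: between any two roots there are three internally disjoint paths in $G$. Grouping these paths appropriately produces either four pairwise-adjacent connected branch sets, one per root (giving a rooted $K_4$-minor), or else an additional branch set that is adjacent to all four root-branch sets while these root-branch sets form a $4$-cycle of adjacencies around it (giving a rooted $W_4$-minor with the extra branch set as the hub). The dichotomy between these two outcomes corresponds exactly to whether the paths connecting pairs of roots can be chosen to avoid a common ``central'' vertex; when they cannot, that central vertex becomes the hub of the $W_4$, and when they can, the four branch sets realise $K_4$.
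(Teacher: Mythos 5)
Your overall strategy---induction on $|V(G)|$, contracting a contractible edge with at most one root endpoint, and lifting the rooted minor back---is a reasonable skeleton, and the lifting step and the small base cases ($n=4$ giving $K_4$; $n=5$ giving $W_4$, $K_5-e$, $K_5$, each of which can be checked) are handled correctly. The paper itself only cites \cite{Ben, Graphminorspaper} for this result rather than proving it, so I cannot line your argument up against the source, but I can assess it on its own terms, and there is a genuine gap.

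The gap is exactly what you identify as ``the technical heart'' and then do not resolve: the case where every contractible edge of $G$ has both endpoints among the four roots. You appeal to Ando--Enomoto--Saito--type results to say $G$ must then lie in a ``small explicit family,'' but you do not extract what that family is. To make this work you need (a) a \emph{quantitative} lower bound on the number of contractible edges that grows with $n$ (so that ``at most $\binom{4}{2}=6$ contractible edges'' bounds $n$), (b) explicit handling of the exceptional graph in such bounds (the wheel, which fortunately has all rim edges contractible and so cannot confine them to four vertices once $n\ge 6$), and (c) an actual enumeration and verification of the resulting bounded family of graphs, for every placement of the four roots. Even with $n$ bounded by something like a dozen, that is a substantial finite case check, not a ``small explicit family'' that can be waved at. As written, step (c) simply does not appear. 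A sharper route here would be to use a result on the \emph{location} of contractible edges (e.g., that a non-root vertex of degree at least four must be incident to a contractible edge, or that contractible edges span the graph in an appropriate sense), which would directly constrain the non-root vertices rather than routing through a counting bound.

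The final paragraph is also not a proof. Saying that Menger's theorem yields three internally disjoint paths between any two roots and that ``grouping these paths appropriately produces'' the branch sets for either a rooted $K_4$ or a rooted $W_4$ is a restatement of the desired conclusion rather than an argument: the whole difficulty is precisely in showing such a grouping exists, and the dichotomy you describe (whether the paths can be chosen to avoid a common central vertex) is asserted, not established. A rigorous finish needs either a concrete case analysis of the bounded exceptional family, or an appeal to a structure theorem for rooted-$K_4$-minor-free graphs of the kind quoted later in the paper (Wood and Fabila-Monroy, \cite{root}), combined with an argument that $3$-connectedness forces a rooted $W_4$-minor inside that structure. That latter route is likely closer to how \cite{Ben, Graphminorspaper} actually proceed and avoids the contractible-edge bookkeeping entirely.
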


Therefore, reducibility with four external on-shell momenta is only interesting for graphs with cut vertices or $2$-separations.  This reduction already simplifies the physics significantly.   As all the minors we are looking at are $2$-connected, graphs with cut vertices are uninteresting, so we restrict to $2$-connected graphs. It turns out that to not have one of the forbidden minors, one has to look largely planar. To formalize this, let $H$ be a planar graph where the outer face is bounded by a $4$-cycle on vertices $a,b,c$ and $d$   , all internal faces are triangles, and all triangles are faces. To each triangle $T$ in $H$, we attach a clique, $F_{T}$, of arbitrary (possibly empty) size where the vertices of $F_{T}$ are only adjacent to to the vertices in $T$ in $H$. Following Wood and Fabila-Monroy, \cite{root}, we call the resulting graph an \textit{$\{a,b,c,d\}$-web}\footnote{Note that this is a different notion of webs than used in studying infrared divergences in non-abelian gauge theories \cite{Wweb} or in studying the Grassmannian via tensor invariants and other similar objects \cite{Kweb, FLLweb}.}. In general, given a graph $H$, we say the graph $H^{+}$ is obtained by adding cliques of arbitrary size to every triangle in $H$ in the same fashion as above. We remark that webs are closely related to the notion of \textit{flat embeddings}, which arise in the study of graph linkages.
%, where a flat embedding is the graph $H^{+}$ where $H$ is an arbitrary planar graph.
Now we state the excluded rooted $K_{4}$-minor theorem:

\begin{figure}
\centering
\includegraphics[scale =0.6]{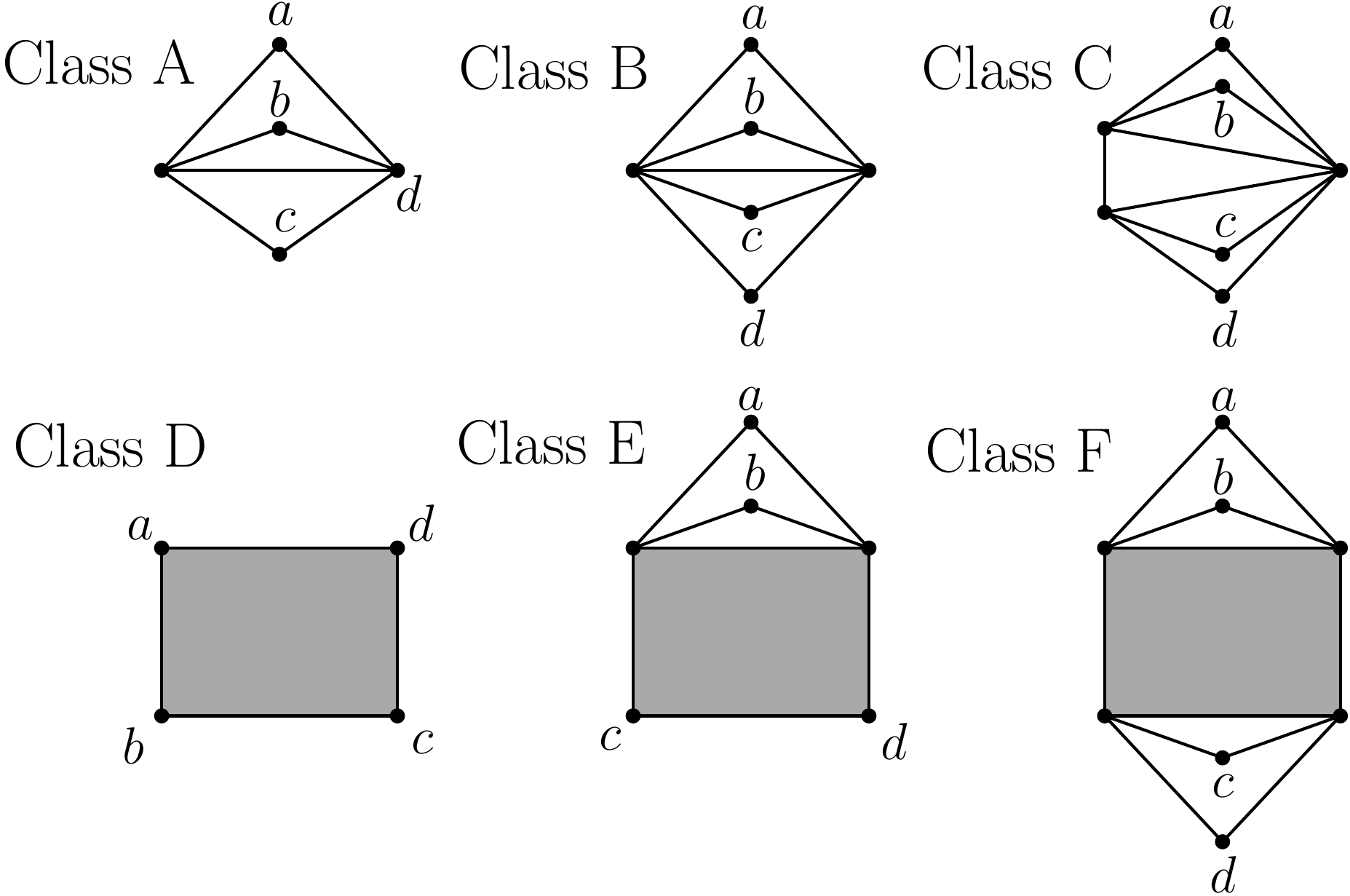}
\caption{The graphs $H$ in Theorem \ref{K4free}. Grey sections are webs in the sense of Wood and Fabila-Monroy. The vertices labelled $a,b,c,d$ are the vertices with external momentum.}
\label{k4freefigure}
\end{figure}

\begin{theorem}[\cite{root}]
\label{K4free}
Suppose $G$ has four roots and is rooted $K_{4}$-minor-free. Then $G$ is the spanning subgraph of a graph $H^{+}$ where $H$ is one of the graphs in Figure \ref{k4freefigure}. 
\end{theorem}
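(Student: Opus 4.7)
The plan is to proceed by a structural decomposition argument on the low-connectivity structure of $G$, reducing to the 3-connected case and then exploiting planarity. First I would handle cut vertices and 2-separations: if $G$ has a 1-separation, the piece not containing any roots can be ignored and we recurse on the piece containing all four roots. If $G$ has a 2-separation $(A, B)$ with shared vertices $\{u, v\}$, then we analyze how the four roots distribute across $A \setminus \{u,v\}$ and $B \setminus \{u,v\}$; in each nontrivial split, we adjoin $u, v$ (and possibly a virtual edge) as additional roots to the smaller pieces, so that rooted $K_4$-minor-freeness is preserved piece by piece, and we apply induction to each side. The structural picture of $G$ is then the 2-sum of the pieces, which corresponds to gluing webs at shared vertex pairs, exactly matching one of the outcomes in Figure \ref{k4freefigure}.

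Second, for a 3-connected $G$ with four roots, I would prove that $G$ is planar with all four roots on a common face. The mechanism is a path-routing argument via Menger's theorem: in any 3-connected graph, sufficiently many internally disjoint paths exist between designated vertex pairs, and if $G$ were nonplanar or had a root off the outer face, we could choose such paths to exhibit four branch sets each containing a distinct root with all six adjacencies realized, producing a rooted $K_4$-minor. Using the planar embedding, the four roots $a, b, c, d$ must appear in this cyclic order on the outer face (up to symmetry), since any other arrangement again yields a rooted $K_4$ via the planar duality and the existence of nested paths.

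Third, I would argue each internal face of the plane embedding is a triangle. Suppose some internal face $F$ has length at least $4$. Pick four distinct vertices of $F$; by 3-connectedness each can be linked by internally disjoint paths to the appropriate root in the outer 4-cycle, using the planar layout to keep paths disjoint. Contracting each such path gives a rooted $K_4$-minor, a contradiction. Hence all internal faces are triangles, so the underlying plane triangulation (with outer 4-face) is exactly a graph of the form shown in Figure \ref{k4freefigure}. Finally, I would verify that attaching an arbitrary clique $F_T$ to each triangular face $T$ preserves rooted $K_4$-minor-freeness: any $K_4$-minor using vertices of $F_T$ must have at most three of its branch sets meeting $F_T \cup V(T)$, and the three vertices of $T$ form a cut of size $3$ separating $F_T$ from the rest, so no new rooted $K_4$ can be assembled.

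The main obstacle I anticipate is the 3-connected planarity step, in particular showing that rooted $K_4$-minor-freeness forces all four roots onto a common face with a prescribed cyclic order; the path-routing is delicate because one must simultaneously ensure disjointness of the four branch sets and coverage of all six pairwise adjacencies in $K_4$, and the casework depends on the relative planar positions of the roots. A secondary obstacle is tightly organizing the 2-separation recursion so that the 2-sum glueings reconstruct precisely the graphs $H^{+}$ of Figure \ref{k4freefigure} rather than a larger family, which requires keeping careful track of root multiplicities across the separations.
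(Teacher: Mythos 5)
The paper does not contain a proof of this theorem; it is cited verbatim from Wood and Fabila-Monroy \cite{root}, so there is no in-paper proof to compare against. I'll therefore assess your sketch on its own merits. The high-level skeleton (reduce along small separations to the $3$-connected case, show planarity with roots on a common face, control the faces, then attach cliques) is the right kind of plan, but there are two genuine gaps, one of which is fatal as written.

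The fatal gap is in your third step, where you argue that every internal face of the plane embedding of $G$ must be a triangle. You are proving a statement about $G$, but the theorem only asserts that $G$ is a \emph{spanning subgraph} of some $H^{+}$; $G$ itself need not be triangulated. Your argument therefore proves too much, and in fact it is false. Consider the cube with the four roots $a,b,c,d$ taken to be the vertices of one face, embedded with that face outermost. This graph is $3$-connected, planar, has the four roots on the outer $4$-cycle in cyclic order, and is rooted $K_{4}$-minor-free: any $a$--$c$ path and any $b$--$d$ path inside the disc bounded by the outer cycle must intersect by the Jordan curve theorem, so the branch sets for $a,c$ cannot be made adjacent while staying disjoint from the branch sets for $b,d$. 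Yet all internal faces of the cube are quadrilaterals. Your ``pick four vertices on a large internal face and route them to the roots'' argument would, if valid, produce a rooted $K_{4}$-minor here, so it cannot be valid; the hypotheses are simply not strong enough to force those routing paths to realize all six adjacencies disjointly. The correct statement is about the existence of a \emph{supergraph} $H$ with triangulated interior containing $G$, not about $G$'s own faces, and establishing that requires a different argument (one must show edges can be added compatibly, which is where the ``all triangles are faces'' clause in the definition of a web does real work).

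The second gap is that the hard part of the theorem is your step two (planarity of the $3$-connected case with all roots on a common face), and your Menger/path-routing sketch does not amount to a proof: you would need to show that in a $3$-connected non-planar graph, or a planar one with a root off the outer face, one can always realize all six adjacencies with pairwise disjoint branch sets, and the case analysis there is substantial. Related to this, your decomposition along $2$-separations is what actually generates the several distinct classes in Figure \ref{k4freefigure} (including the ones where some roots sit off the web, separated from it by a small cut), and your sketch collapses all of these into a single ``$2$-sum of webs'' picture without showing that the glueings land exactly in the classified family and nowhere else. Finally, your last paragraph (attaching cliques preserves $K_4$-minor-freeness) is a converse sanity check, not part of what needs proving here, though the $3$-cut argument you give for it is sound in spirit.
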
 

 When we add in rooted $K_{2,4}$-minors and $W_{4}$-minors we can restrict the graph class quite significantly. For the rest of the paper, we will use the phrase $G$ is a spanning subgraph of a class $\mathcal{A}$ (for example) to mean that $G$ is isomorphic to a spanning subgraph of a graph $H^{+}$ where $H$ is the graph in Figure \ref{k4freefigure} under the header class $\mathcal{A}$. 

We can collect together and summarize the main forbidden minor characterization results of \cite{Ben} and \cite{Graphminorspaper} (which can be consulted for details) in the following theorem.
 
\begin{theorem}[\cite{Ben}, \cite{Graphminorspaper}]
Let $G$ be a $2$-connected graph with four roots. If $G$ is a spanning subgraph of a class $\mathcal{A}$ graph then $G$ does not have a rooted $W_{4}$ or $K_{2,4}$-minor.  If $G$ is a spanning subgraph of a class $\mathcal{B}$ or $\mathcal{C}$ graph, then $G$ has a rooted $K_{2,4}$-minor. If $G$ is a spanning subgraph of an $\{a,b,c,d\}$-web and has a rooted $K_{2,4}$-minor, then $G$ has a rooted $W_{4}$-minor. Furthermore, a spanning subgraph of an $\{a,b,c,d\}$-web does not have a rooted $W_{4}$-minor if and only if it is a spanning subgraph of one of the graphs in Figure \ref{Wheelobstructions} (see below for a more detail). Finally, spanning subgraphs of class $\mathcal{E}$ and $\mathcal{F}$ graphs have a rooted $W_{4}$-minor if and only if the underlying $\{a,b,c,d\}$-web has a rooted $W_{4}$-minor, where we move the roots which are not on the web to distinct neighbouring vertices on the web. 
\end{theorem}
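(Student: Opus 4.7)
The plan is to handle the five assertions of the theorem separately, since they are essentially independent structural claims about the classes $\mathcal{A}$ through $\mathcal{F}$ produced by Theorem~\ref{K4free}. Throughout, I would exploit the fact that every such class is built over an $\{a,b,c,d\}$-web, which is a near-triangulation of a disk with the four roots on the boundary $4$-cycle, decorated only by cliques glued onto triangular faces. This planar-with-cliques structure is very rigid for minor-containment questions, because a branch set that lives entirely inside one clique contributes nothing to rooted connectivity beyond what a single vertex of that clique already contributes.

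For the first assertion I would argue directly that a class $\mathcal{A}$ graph cannot admit a rooted $W_4$ or $K_{2,4}$-minor. Both minors would require an internal "hub" (a vertex or branch set) from which four internally disjoint paths reach the four roots, with additional structure (a $4$-cycle around the hub for $W_4$; a second hub for $K_{2,4}$). Using a planar Menger/Euler-style argument and the fact that the roots lie on the outer $4$-face in cyclic order $a,b,c,d$, I would show that the required four disjoint paths would force a non-planar configuration or a crossing of branch sets inside the triangulated disk. For the second assertion, classes $\mathcal{B}$ and $\mathcal{C}$ differ from $\mathcal{A}$ precisely in that extra structure sits across the outer $4$-face; I would exhibit the two hub branch sets of a rooted $K_{2,4}$ explicitly, one inside the web and one using the added structure, with four disjoint root-paths constructed by routing along the triangulation.

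The third assertion (a web containing a rooted $K_{2,4}$-minor must contain a rooted $W_4$-minor) is where the triangulated-disk topology really pays off. Given two hub branch sets $X_1, X_2$ inside the web together with four disjoint paths to $a,b,c,d$, I would contract $X_1 \cup X_2$ together with any path segment lying between them; because every internal face is a triangle, the resulting single hub is adjacent to every vertex on a $4$-cycle formed by the endpoints of the four root-paths, which is exactly the $W_4$-minor pattern. For the fifth assertion I would give a rerouting argument: if a rooted $W_4$-minor uses an auxiliary vertex attached in a class $\mathcal{E}$ or $\mathcal{F}$ extension, then the branch set containing that vertex can be contracted into the underlying web-neighbour, and a root sitting off the web can be pushed along its attachment edge, producing a rooted $W_4$-minor in the web with the shifted roots; the converse is immediate.

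The hard part, and the main obstacle, is the fourth assertion: the exact forbidden-subgraph characterization (within the class of $\{a,b,c,d\}$-webs) of the property of being rooted $W_4$-minor-free. Sufficiency is a finite check against each graph in Figure~\ref{Wheelobstructions}. Necessity is the delicate direction: I would proceed by induction on the number of internal vertices of the web, or extremally by picking a web $H$ that is rooted $W_4$-minor-free but maximizes the number of edges. For a maximal such $H$, removing any single internal vertex or contracting any internal edge must either create a rooted $W_4$-minor or leave $H$ outside the web class, and a careful face-by-face analysis should force the triangulation to look like one of the finitely many configurations in Figure~\ref{Wheelobstructions}. I expect the combinatorial case analysis here, separating by how many internal vertices there are and how they are distributed around the $4$-face, to be the most intricate portion of the proof, and I would rely on the references \cite{Ben, Graphminorspaper} for the full case breakdown.
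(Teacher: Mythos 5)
The paper itself does not prove this theorem; it explicitly \emph{summarizes} the characterization results of the cited works and says those references ``can be consulted for details.'' So there is no in-paper proof to compare your proposal against, and in that sense your decision to defer the fourth assertion (the exact list of obstructions in Figure~\ref{Wheelobstructions}) to \cite{Ben, Graphminorspaper} is consistent with what the authors themselves do.

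That said, your sketch contains a genuine gap in the third assertion. You propose contracting the two $K_{2,4}$ hubs $X_1, X_2$ (plus a connecting path) into a single $W_4$ hub, and then claim the four root-paths automatically terminate on a $4$-cycle ``because every internal face is a triangle.'' This does not follow: the triangulation guarantees nothing about the four spoke branch sets of the $K_{2,4}$-minor being pairwise adjacent in the cyclic order $a\text{--}b\text{--}c\text{--}d\text{--}a$, nor that the path joining $X_1$ to $X_2$ can be taken disjointly from the four spokes. What actually drives this implication is the planarity of the web together with the fact that $a,b,c,d$ lie in that cyclic order on the outer $4$-face; the triangulation of internal faces is beside the point. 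You also treat the second assertion as simply ``exhibit the hubs,'' but since $G$ is only required to be a $2$-connected spanning subgraph of a class $\mathcal{B}$ or $\mathcal{C}$ graph, one still has to argue that $2$-connectivity alone is enough to recover the claimed $K_{2,4}$-minor after arbitrary edge deletions; that is not automatic and needs justification. Finally, in the fifth assertion, ``the converse is immediate'' deserves an explicit argument for why shifting a root along an attachment edge produces a rooted minor in the original (and not just the modified) graph.
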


\begin{figure}
\centering
\includegraphics[scale =0.5]{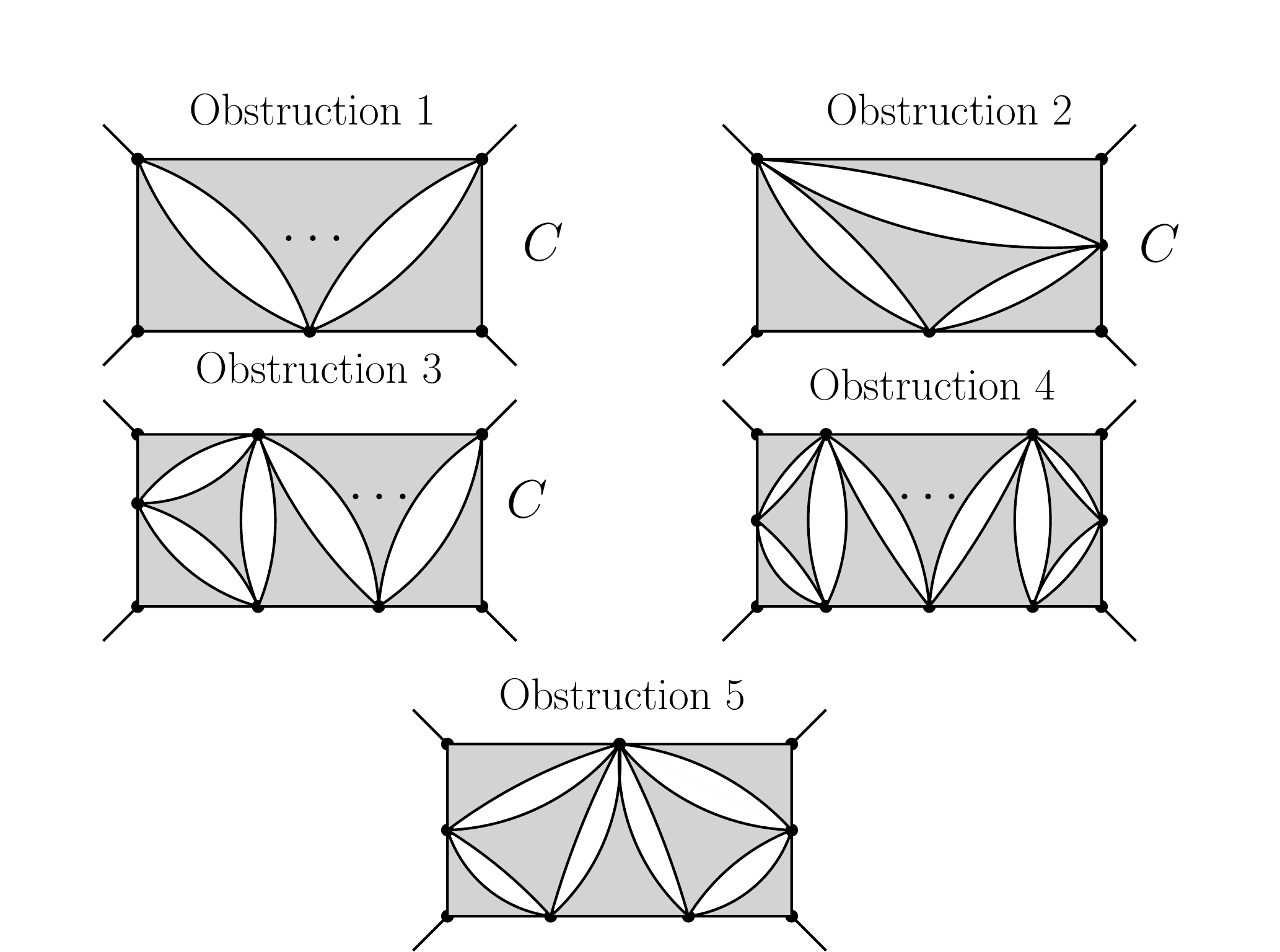}
\caption{Obstructions to having a rooted $W_{4}$-minor in an $\{a,b,c,d\}$-web. The curved white sections indicate a $2$-vertex cut. The dots indicate a $2$-chain. }
\label{Wheelobstructions}
\end{figure}

Now we clarify the obstructions in Figure \ref{Wheelobstructions}. All of the obstructions are build up of two pieces, \textit{root separating triangles} and  \textit{root separating $2$-chains}. A \textit{$2$-separation} $(A,B)$ is the partition of $V(G)$ induced by a $2$-vertex cut. A \textit{$2$-dissection} is a sequence of $2$-separations $((A_{1},B_{1}),\ldots,(A_{k},B_{k}))$, where $A_{i} \subseteq A_{i+1}$ for all $i \in \{1,\ldots,k-1\}$ and $B_{i+1} \subseteq B_{i}$ for all $i \in \{2,\ldots,k\}$. A \textit{$2$-chain} is a $2$-dissection $((A_{1},B_{1}),\ldots,(A_{k},B_{k}))$ where $A_{i} \cap B_{i} \cap A_{i+1} \cap B_{i+1}$ is non-empty for all $i \in \{1,\ldots,k-1\}$. Given a graph with four roots, say $a,b,c,d$, a \textit{root separating $2$-chain} is a $2$-chain where $a \in A_{1} \cap B_{1}$, $b \in A_{1} \setminus B_{1}$, $c \in A_{k} \cap B_{k}$, and $d \in B_{k} \setminus A_{k}$. In Figure \ref{Wheelobstructions}, obstruction one is a root separating $2$-chain, and in general, any root separating $2$-chain is an obstruction. 

Now, we define a \textit{triangle} to be a collection of $2$-separations $(A_{1},B_{1}),(A_{2},B_{2}),(A_{3},B_{3})$ where $A_{1} \cap B_{1} = \{x,y\}$, $A_{2} \cap B_{2} = \{y,z\}$, and $A_{3} \cap B_{3} = \{x,z\}$. For notational convenience, we will enforce that in a triangle, $(A_{i} \setminus B_{i}) \cap (A_{j} \setminus B_{j}) = \emptyset$ for any $i,j \in \{1,2,3\}$, $i \neq j$. Then, a triangle $(A_{1},B_{1}),(A_{2},B_{2}),(A_{3},B_{3})$, is a  \textit{root separating triangle} if exactly two of the roots are contained in $A_{1}$, exactly one root is contained in $A_{2} \setminus B_{2}$ and exactly one root contained in $A_{3} \setminus B_{3}$. Obstruction $2$, $3$, $4$, and $5$ from Figure \ref{Wheelobstructions} are all built up from root separating triangles.

The only rooted minor which we haven't discussed is the rooted $L$-minor. It turns out that by additionally restricting to graphs without rooted $L$-minors we do not actually see that much change in the graph class. After applying some reductions which we won't discuss here, it turns out root separating $2$-chain obstructions are the only place where one can find rooted $L$-minors. In these instances, the only time one can only obtain a rooted $L$-minor if the root separating $2$-chain has even length, and ``inside" a pair of separations (as in, in the graph induced by $G[A_{i+1}] \cap G[B_{i}]$) one can find a certain smaller rooted minor. The parity condition arises from the fact that if the length of the two chain is odd, then for any cycle containing the roots $a,b,c,d$, the two roots not appearing in $A_{1} \cap B_{1}$ and $A_{k} \cap B_{k}$ are separated by the two roots which do appear in $A_{1} \cap B_{1}$ and $A_{k} \cap B_{k}$. As in, if $b,d \not \in A_{1} \cap B_{1}$ and $b,d \not \in A_{k} \cap B_{k}$, then walking in any direction along the cycle from $b$, we hit $a$ or $c$ before $d$.  We refer the reader to \cite{Ben, Graphminorspaper} for more details.

Notice throughout these results, nothing has been said about the arbitrary graph hidden in each triangle. This is because the rooted minors cannot ``see" the structure of these graphs, as the triangles are too small of a cut-set.  However, the forbidden minors for the first Symanzik polynomial (which also are obstructions for reducibility for both Symanzik polynomials) are not rooted, and thus impose structure on those sections of the graph. 

One small thing that can be said about the graphs in each triangle while still using the rooted minors is that if two triangles $T_{1}$ and $T_{2}$ have three disjoint paths between them, then at most one of the spanning subgraphs of cliques adjoined to the triangles is non-planar. This follows easily from a well known result on rooted $K_{2,3}$-minors by Seymour and Thomasson \cite{rootedk23theorem}. 

As to the effect of the known minors for the first Symanzik polynomial on the graphs hidden in each triangle, unfortunately, in general not much is known about graphs without $K_{3,4}$-minors and $V_{8} \cup \{02\}$-minors. In the projective plane, vertically $4$-connected graphs without $K_{3,4}$-minors have been characterized as graphs which are minors of M\"obius hyperladders (or spanning subgraphs of certain ``patch graphs") \cite{K34minorsprojpart1, K34minorsprojpart2}. Essentially nothing is known about $V_{8} \cup \{02\}$-minors, however $V_{8}$-minor-free graphs have been characterized by Robertson and Maharray \cite{NoV8minors}, so one might hope to be able to characterize $V_{8} \cup \{02\}$-minor free graphs by using the $V_{8}$ result and the splitter theorem. 

Despite not having a clear picture of what the class of reducible graphs for the first Symanzik polynomial looks like, one can make a reasonable guess on what the class should be. As mentioned in the introduction, Brown showed that graphs with vertex width less than or equal to three are reducible for the first Symanzik polynomial. Define vertex width as follows.

 Let $G$ be a graph with $m$ edges. The \textit{width} of an ordering $e_1, e_2,\ldots, e_m$ of $E(G)$ is the maximum order of a separation of the form $(\{e_1,\ldots,e_l\},\{e_{l+1},...,e_m\})$ for $l \in \{1,\ldots,m\}$. Here we are viewing $\{e_1, \ldots,e_l\}$ as a set of vertices induced by the edges $e_{1},\ldots, e_{l}$, and the order of the separation is $|\{e_1,\ldots,e_l\} \cap \{e_{l+1},...,e_m\}|$. The \textit{vertex width} of $G$ is the minimum width among all edge orders of $G$.
 
 We note that vertex width is closely related to other graph theoretic width properties. In particular, vertex width is close to the linear version of treewidth, pathwidth, and the linear notion of branch width, which is sometimes called caterpillar width.  In the 3-connected case vertex-width 3 also matches with zero forcing number equal to 3 \cite{AIMgraphs}.

It is easy to see that the graphs satisfying vertex width $\leq 3$ are planar. In fact, the $3$-connected graphs satisfying vertex width $\leq 3$ were characterized in terms of forbidden minors ($K_{5},K_{3,3}$, the cube, the octahedron, and one other special graph) by Crump in his master's thesis \cite{CrumpMastersthesis, Crumppaper}. Unfortunately, all of the forbidden minors for $3$-connected vertex width $\leq 3$ graphs are reducible with respect to the first Symanzik polynomial, so the class of reducible graphs for the first Symanzik polynomial should be somewhat larger, but it still gives a nice infinite family of graphs which are reducible for the first Symanzik polynomial. On the flip side, there is a beautiful conjecture of J\"orgensen which says that a $6$-connected graph is $K_{6}$-minor-free if and only if there exists a vertex $v$ such that the deletion of $v$ results in a planar graph. This conjecture has been proven for graphs with a large number of vertices \cite{jorgensensconjecture}. Assuming the truth of this conjecture (or assuming we are dealing with large graphs), combined with Observation \ref{K34andK6}, we see that graphs which are reducible should be ``nearly" planar, in the sense that deleting a small number of vertices leaves a planar graph. Thus we see that the class of reducible graphs for the first Symanzik polynomial should lie somewhere between the class of vertex width $\leq 3$, which are highly structured planar graphs, and graphs which are ``almost" planar. We also note that there is a planar graph which is not reducible \cite{nonmultiplezetavalue}, and so perhaps it is reasonable to expect the class of reducible class lies closer to highly structured vertex width $\leq 3$ graphs than the class of ``almost" planar graphs.

For reducibility of graphs with $4$ external on-shell momenta, given that going from rooted $K_{4}$, $W_{4}$, and $K_{2,4}$-minor free graphs to rooted $K_{4}$, $W_{4}$, $K_{2,4}$ and $L$-minor free graphs does not change the characterization too much, we anticipate the graph class in the exact characterization not to differ too much from what was presented here (however there may be significantly many more forbidden minors).

\bibliography{ThesisBib}
  \bibliographystyle{plain}
\end{document}